\newtheorem{definition}{Definition}
\newtheorem{theorem}{Theorem}
\newtheorem{lemma}{Lemma}
\newtheorem{corollary}{Corollary}
\let\uppercasenonmath\@gobble
\let\scshape\relax
\def\specialsection{\@startsection{section}{1}%
  \z@{\linespacing\@plus\linespacing}{.5\linespacing}%
  {\normalfont}}
\def\section{\@startsection{section}{1}%
  \z@{.7\linespacing\@plus\linespacing}{.5\linespacing}%
  {\normalfont\scshape\bfseries}} 
\begin{document}

\newcommand{\R}{\mathcal{R}_N}
\newcommand{\SM}{\mathcal{S}_M}
\newcommand{\dist}{\operatorname{dist}}
\newcommand{\RM}{\mathcal{R}_M}

\stackMath
\def\hatgap{2pt}
\def\subdown{-2pt}
\newcommand\reallywidehat[2][]{%
	\renewcommand\stackalignment{l}%
	\stackon[\hatgap]{#2}{%
		\stretchto{%
			\scalerel*[\widthof{$#2$}]{\kern-.6pt\bigwedge\kern-.6pt}%
			{\rule[-\textheight/2]{1ex}{\textheight}}
		}{0.5ex}
		_{\smash{\belowbaseline[\subdown]{\scriptscriptstyle#1}}}%
}}

\newcommand{\wal}{\operatorname{wal}}
\newcommand{\sal}{\operatorname{sal}}
\newcommand{\calo}{\operatorname{cal}}
\newcommand{\Wal}{\operatorname{Wal}}
\newcommand{\Sal}{\operatorname{Sal}}
\newcommand{\Cal}{\operatorname{Cal}}

\title{\bf\vspace{-39pt} Linear reconstructions and the analysis of the stable sampling rate}

\author{Laura Thesing \\ \small Department of Applied Mathematics and Theoretical Physics, \\ University of Cambridge, Wilberforce Road, Cambridge \\ \small lt420@cam.ac.uk \\
\\
Anders Hansen \\ \small Department of Applied Mathematics and Theoretical Physics,\\ University of Cambridge, Wilberforce Road, Cambridge \\ \small ach70@cam.ac.uk}

\date{}

\begin{abstract}
	The theory of sampling and the reconstruction of data has a wide range of applications and a rich collection of techniques. For many methods a core problem is the estimation of the number of samples needed in order to secure a stable and accurate reconstruction. This can often be controlled by the {\it Stable Sampling Rate (SSR)}. In this paper we discuss the SSR and how it is crucial for two key linear methods in sampling theory: generalized sampling and the recently developed Parametrized Background Data Weak (PBDW) method. Both of these approaches rely on estimates of the SSR in order to be accurate. In many areas of signal and image processing binary samples are crucial and such samples, which can be modelled by Walsh functions, are the core of our analysis.  As we show, the SSR is linear when considering binary sampling with Walsh functions and wavelet reconstruction. Moreover, for certain wavelets it is possible to determine the SSR exactly, allowing sharp estimates for the performance of the methods.
	\vspace{5mm} \\
	\noindent {\it Key words and phrases} : Sampling, Wavelets, binary measurements, Generalized sampling, linear reconstruction, 
	\vspace{3mm}\\
	\noindent {\it 2010 AMS Mathematics Subject Classification} : 94A20, 42C10, 42C40 (primary); 65R32, 94A08, 94A12 (secondary)
\end{abstract}

\maketitle
\thispagestyle{fancy}

\markboth{\footnotesize \rm \hfill L. THESING AND A. HANSEN \hfill}
{\footnotesize \rm \hfill STABLE SAMPLING RATE \hfill}

\section{Introduction}

Sampling theory is a core principle in image and signal processing as well as the mathematics of information, data science and inverse problems. Since the early results of Shannon \cite{ShannonOverview, Shannon, Shannon50} many techniques have been developed, and there are now a myriad of methods available. Moreover, the many applications, such as Magnetic Resonance Imaging (MRI)  \cite{MRI, Unser_MRI}, electron tomography \cite{lawrence2012et,leary2013etcs}, lensless cameras, fluorescence microscopy \cite{Candes_PNAS, Roman}, X-ray computed tomography \cite{Stanford_CT, quinto2006xrayradon}, surface scattering \cite{Nature_sci_rep} as well as parametrised PDEs \cite{deVore1, deVore3, deVore2}, make the field well connected to different areas of the sciences.
 
A standard sampling model is as follows. 
We have an element $f \in \mathcal{H}$, where $\mathcal{H}$ is a separable Hilbert space, and the goal is to reconstruct an approximation to $f$ from a finite number of linear samples of the form $l_i(f)$, $i \in \mathbb{N}$. In particular, given that the $l_i$s are linear functionals, we measure the scalar product between $f$ and some sampling element $s_i \in \mathcal{H}$, $i\in \mathbb{N}$, i.e. $l_i(f) = \langle f, s_i \rangle$. 
It is important to note that the  $l_i$s cannot be chosen freely, but are dictated by the modality of the sampling device, say an MRI scanner providing Fourier samples or a fluorescence microscope giving binary measurements modelled by Walsh coefficients. A natural question that arises from this setting is, what is the number of samples that is needed for a accurate and stable reconstruction? This can be made explicit by the stable sampling rate (SSR).

To define the SSR we first introduce the sampling space and the reconstruction space. We define the  {\it sampling space} $\mathcal{S} = \overline{\operatorname{span}} \{ s_i : i \in \mathbb{N} \} \subset \mathcal{H}$, meaning the closure of the span. In practice, one can only acquire a finite number of samples, therefore, we denote by $\mathcal{S}_M = \operatorname{span} \{ s_i : i = 1, \ldots, M \}$ the sampling space of the first $M$ elements. Similarly, the  reconstruction space denoted by $\mathcal{R}$ is spanned by reconstruction functions $(r_i)_{i \in \mathbb{N}}$, i.e. $\mathcal{R} = \overline{\operatorname{span}} \{ r_i : i \in \mathbb{N} \}$. As in the sampling case, one has to restrict to a finite reconstruction space, which is denoted by $\mathcal{R}_N = \operatorname{span} \{ r_i : i = 1 , \ldots , N \}$.

The key ingredient in the definition of the SSR is the  \emph{subspace angle} $\omega$ between the subspaces $\R$ and $\SM$. In particular, 
	\begin{equation}
	\cos(\omega(\R,\SM)) := \inf_{r \in \R, \|r\|=1} \|P_{\SM} r \|.
	\end{equation} 
The orthogonal projection onto the sampling space is denoted by $P_{\SM}$. Mainly, one is interested in the reciprocal value 
\[
\mu(\R,\SM) =  1/\cos(\omega(\R,\SM)) \in [1, \infty],
\]
which, as we will see below, plays a key role in all the error estimates of the two linear algorithms discussed here. Due to the definition of cosine, $\mu$ takes values in $[1,\infty]$.
We can now define the \emph{stable sampling rate}
\begin{equation}
\Theta(N, \theta) = \min \left\{ M \in \mathbb{N}: \mu(\mathcal{R}_N,\mathcal{S}_M)  < \theta \right\}.
\end{equation}
In particular, the SSR determines how many samples $M$ are needed when given $N$ reconstruction vectors in order to bound $\mu$ by $\theta$.

Although the SSR is developed mainly for linear methods there is a strong connection to non-linear techniques. For example in infinite-dimensional compressed sensing a key condition is the so-called {\it balancing property} \cite{GSinfCS}, which is very similar to the SSR. When the balancing property is satisfied, the truncated matrix $P_{[N]}UP_{[M]}$ is close to an isometry. Hence, it also ensures the stability of the reconstruction.

The stable sampling rate has already been analysed for different settings. A prominent one is the reconstruction from Fourier measurements. In this case the sampling functions $s_i$ are the complex exponentials. For the reconstruction of wavelets it was shown in \cite{linearity} that the SSR is linear. The other main measurements are binary. Combined with wavelets they also have a linear SSR \cite{WalshSSR}.
 
The purpose of this paper is twofold:
	\begin{itemize}
		\item First, we want to compare two linear reconstruction methods: generalized sampling \cite{GS} and the PBDW approach based on data assimilation \cite{deVore3}. We also show how both linear methods completely rely on the SSR in order to be accurate. Additionally, the non-linear extension of generalized sampling, which is presented in \cite{Clarice}, is included in the comparison.
		\item Second, we provide sharp results on the SSR when considering Walsh functions and Haar wavelets. This can be done by realising the common structure of the Walsh functions and the Haar wavelets. Although the SSR is linear when considering Walsh samples and wavelet reconstructions, sharp results on the SSR for arbitrary Daubechies wavelets are still open. The difficulty is that the higher order Daubechies wavelets share very little structural similarities with the Walsh functions. 
	\end{itemize}

\section{Reconstruction Methods}

In terms of reconstruction methods, there are three different properties that are often desired. The most important are obviously \emph{accuracy} and \emph{stability}. However, \emph{consistency}, meaning that the reconstruction will yield the same samples as the true solution, is also often considered an advantage. Below, we will see how the SSR is crucial for the two former properties.

\subsection{Reconstruction and Sampling Space}

Throughout the paper $\mathcal{H} = L^2([0,1]^d)$. Due to the fact that we are dealing with the $d$-dimensional case, we introduce multi-indices to make the notation more readable. Let $j = \left\{j_1, \ldots, j_d \right\} \in \mathbb{N}^d$, $d \in \mathbb{N}$, be a multi-index. A natural number $n$ is in the context with a multi-index interpreted as a multi-index with the same entry, i.e. $n = \left\{n, \ldots ,n \right\}$. Then, we define the addition of two multi-indices for $j, r \in \mathbb{N}^d$ by the point-wise addition, i.e. $j + r = \left\{j_1 +r_1, \ldots, j_d +r_d \right\}$ and the sum $\sum_{j=k}^{r} x_j := \sum_{j_1 = k_1}^{r_1} \ldots \sum_{j_d = k_d}^{r_d} x_{j_1, \ldots, j_d}$, where $k,r \in \mathbb{N}^d$. The multiplication of a multi-index with a real number is understood point-wise as well as the division by a multi-index.

We now discuss the sampling space $\mathcal{S}_M$. As pointed out before, we are interested in binary measurements. Binary measurements come from applications such as fluorescence microscopy or single pixel cameras. Mathematically they are modelled by scalarproducts of our function of interest $f$ and sampling functions $s_i$, $i \in \mathbb{N}$, which take only the values $\left\{0,1\right\}$ or $\left\{-1,1\right\}$. Without loss of generality, we assume in the further scope of this paper that the functions take values $\left\{-1,1\right\}$. The former can be attained by taking an additional measurement with the constant function. For practical reasons it is desirable to have a fast transform for these measurements. Therefore, measurement matrices which consist of random matrices are not desirable. Additionally, it has been shown that the reconstruction quality is better with more structured sampling spaces. Therefore, we use Walsh functions to represent the measurements. Walsh functions obey the advantage that they correspond to a fast transform. Moreover, they are the kernels of the Hadamard transform and the analogue or dyadic version of exponential functions. Hence, they are a natural choice to model binary measurements. Now, we define the Walsh functions.
The Walsh functions in higher dimensions can be represented by the tensor product of one-dimensional Walsh functions. 

\begin{definition}[Walsh function \cite{deutschWalsh}]
	Let $n =\sum_{i\in \mathbb{Z}} n_i 2^{i-1}$ with $n_i \in \left\{0,1 \right\}$ be the dyadic expansion of $s \in \mathbb{R}_+$. Analogously, let $x = \sum_{i \in \mathbb{Z}} x_i 2^{i-1}$ with $x_i \in \left\{ 0,1 \right\}$. The generalized Walsh functions in $L^2([0,1])$ are given by 
	\begin{equation}
	\operatorname{Wal}(n,x) = (-1)^{\sum_{i \in \mathbb{Z}} (n_i + n_{i+1})x_{-i-1}}.
	\end{equation}
	We extend it to functions in $L^2([0,1]^d)$ by the tensor product for $n = (n_k)_{k=1,\ldots,d}, x = (x_k)_{k=1,\ldots,d}$
	\begin{equation}
	\operatorname{Wal}(n,x) = \bigotimes_{k=1}^d  \operatorname{Wal}(n_k,x_k) .
	\end{equation}
\end{definition}
Notice that the parameter $n$ represents the number of zero crossings of the function. For this reason, it is often referred to as sequency, which is similar to frequency in the case of exponential functions.
The Walsh functions span the sampling space, i.e. for $M=m^d, m\in \mathbb{N}$ we have
\begin{equation}
\mathcal{S}_M = \operatorname{span} \left\{ \Wal(n,\cdot), n = (n_k)_{k=1, \ldots,d}, n_k =1,\ldots,m, k = 1, \ldots, d \right\}.
\end{equation}
Moreover, Walsh functions can be extended to negative inputs by $\Wal(-n,x) = \Wal(n,-x) = -\Wal(n,x)$.

With the help of the Walsh functions we can define the continuous Walsh transform of a function $f \in L^2([0,1]^d)$ as in \cite{deutschWalsh} almost everywhere by
\begin{equation}
\reallywidehat[W]{f}(n) = \langle f(\cdot), \Wal(n,\cdot) \rangle = \int_{[0,1]^d} f(x) \Wal(n,x)dx, ~ n \in \mathbb{R}^d.
\end{equation}
The Walsh functions have the following useful properties. First, they obey the scaling property, i.e. $\Wal(2^jn,x) = \Wal(n,2^jx)$ for all $j \in \mathbb{N}$ and $n, x \in \mathbb{R}$. Second, the multiplicative identity holds, this means $\Wal(n,x)\Wal(n,y) = \Wal(n,x \oplus y)$, where $\oplus$ is the dyadic addition, i.e. the element-wise addition modulo two. These properties are also easily transferred to the Walsh transform. For further information on Walsh functions and transforms see \cite{WalshAndTheirAppl, dyadicAnalysis, dyadicDerivative}.

Direct inversion from a finite number of samples, both in the Fourier and Walsh case, may lead to substantial artefacts such as the Gibbs phenomenon in the Fourier case or block artefacts known from Walsh functions. This can be seen in the numerical experiments in Figures \ref{fig:HaarFourTF64}, \ref{fig:WalHelpTF128} and \ref{fig:TF2D128}. Therefore, it is important to consider reconstruction spaces $\mathcal{R}$  which represent the data in a better way so that a finite and low-dimensional subspace leads to a good reconstruction. In many different applications, such as image and signal processing and also representations of solution manifolds for PDEs \cite{OrderReductionWav}, wavelets have become highly popular alternatives. 
The reconstruction space is spanned by reconstruction functions $r_i$, $i \in \mathbb{N}$. As it is not possible to deal numerically with an infinite number of samples, it also is not possible to reconstruct infinite number of coefficients. Therefore, we examine the reconstruction space $\R = \operatorname{span}\left\{ r_i: i =1,\ldots,N \right\}$. When $\R$ is used as an approximation for the solution manifold of a PDE, we are also given the approximation error $\epsilon_N$.

In the following, we use wavelets as the reconstruction space due to their good time and frequency localisation. First, we study the one-dimensional case. Then, we get to higher dimensions. We use the common notation and denote the mother wavelet with $\psi$ and the corresponding scaling function with $\phi$. These functions are then scaled and translated. This results in the functions
\begin{equation}
	\psi_{R,j}(x) := 2^{R/2} \psi(2^R x - j) \text{ and } \phi_{R,j}(x) := 2^{R/2} \phi(2^R x - j),
\end{equation}
where $R,j \in \mathbb{Z}$. The wavelet space at a certain level $r$ is given by $W_r := \operatorname{span} \left\{\psi_{r,j} : j \in \mathbb{Z} \right\}$ and the scaling space is given by $V_r := \operatorname{span} \left\{\phi_{r,j} : j \in \mathbb{Z} \right\}$. Often one is interested in the representation of functions in $L^2([0,1])$ instead of $L^2(\mathbb{R})$. For this sake boundary corrected wavelets were introduced in \cite{boundaryWavelets}. We are focussing on the version presented in chapter $4$ of \cite{boundaryWavelets}, because they still obey the same smoothness and vanishing moments properties as the wavelets they are derived from on the real line. Additionally, they also remain the multi-resolution analysis property. We now shortly repeat the construction. The scaling space for boundary corrected wavelets is spanned by the original scaling function $\phi$ and reflections around $1$ of the scaling function $\phi^\#$, i.e.
\begin{equation}
	V_r^b = \operatorname{span}\left\{ \phi_{r,j} : j = 0,\ldots,2^r -p -1 , \phi_{r,j}^\# : j = 2^r - p,\ldots,2^r-1  \right\},
\end{equation}
for Daubechies wavelets of order $p$.
The boundary corrected Daubechies wavelets still obey the multi-resolution analysis. Therefore, it is possible to represent the union of the wavelet spaces up to a certain level $R-1$ by the scaling space at this level, i.e. 
\begin{equation}
	\bigcup_{r < R} W_r^b = V_R^b.
\end{equation}
Hence, it is not necessary for the analysis to have a deeper look in the ordering of the wavelets and their construction for the boundary corrected version as long as we have that the number of coefficients equals the number of elements in that level, i.e. if $N = 2^R$ for some $R \in \mathbb{N}$. The reconstruction space is then given by
\begin{equation}
	\R : = V_R^b.
\end{equation}
The higher-dimensional scaling spaces are constructed by the tensor product of the one-dimensional one, such that we get in $d$ dimensions
\begin{equation}
	\R = V_R^{b,d} := V_R^b \otimes \ldots \otimes V_R^b \quad \text{ (d-times)}
\end{equation}
for $N = 2^{dR}$. Note that the corresponding wavelet space is not simply the tensor product of the one-dimensional wavelets. Fortunately, this is not often a problem in the analysis, since Daubechies wavelets obey the multi-resolution analysis. We will deal with the internal ordering for the one- and two-dimensional case for the Haar wavelets.

\subsection{Reconstruction Techniques}\label{CH:ReconstructionTechniques}

In the following, we present two different interesting reconstruction methods, which are both optimal in their setting. We highlight some advantages and disadvantages. Particularly, we see that the performance of both methods depends highly on the subspace angle between the sampling and the reconstruction space. This gives rise to the discussion in \S \ref{CH:Linearity} about the question for which sampling and reconstruction spaces the stable sampling rate is linear.

\subsubsection{PBDW-method}

In \cite{deVore1, deVore3, deVore2} the PBDW-method from \cite{PBDW} is analysed. In contrast to the prominent application for image and signal analysis this method arises from the application with PDEs. One tries to estimate a state $f$ of a physical system by solving a parametric family of PDEs
\begin{equation}
	\mathcal{F}(f,\zeta)=0,
\end{equation}
where $\mathcal{F}$ is a differential operator. The parameter $\zeta$ may not be known exactly. Therefore, the value of $f$ cannot be attained by simply solving the PDE. Hence, other information is necessary. In most applications, one has access to linear measurements $l_i(f)$, $i =1,\ldots,M$ of the state $f$. This alone is not sufficient to estimate $f$ or more ambitiously even the parameter $\zeta$. Fortunately, one also has information about the PDE, which can be used to analyse the solution manifold $\mathcal{M}$. The solution manifold is usually quite complicated and not given directly. Hence, approximations are used. The method used in this context is the approximation by a sequence of nested finite subspaces
\begin{equation}
\mathcal{R}_0 \subset \mathcal{R}_1 \subset \ldots \subset \R , \quad \dim (\mathcal{R}_j) = j,
\end{equation} 
where the approximation error is known to be $\epsilon_j$ for each subspace $\mathcal{R}_j$. There are a lot of different methods that allow us to construct the spaces $\mathcal{R}_j$, such as the \emph{reduced basis method} \cite{binev2011convergence, buffa2012priori, devore2013greedy, wojtaszczyk2015greedy} and the use of wavelets \cite{OrderReductionWav}.

The given setting leads to the goal of trying to merge the data driven and model based information, which leads to the concept of \emph{PBDW-method}.
Let the measurement $s := P_{\SM}f$ be given. The idea is to combine the information given by the measurements and the PDE. 
This means we are searching for an approximation $f^* \in \mathcal{K}_s$ where
\begin{equation}
	\mathcal{K} = \left\{ f \in \mathcal{H}: \operatorname{dist}(f,\R) \leq \epsilon_N \right\} \text{ and } \mathcal{H}_s = \left\{ f \in \mathcal{H}: P_{\SM} f = s \right\}.
\end{equation}
The intersection is then the space of possible solutions, i.e. $\mathcal{K} \cap \mathcal{H}_s = \mathcal{K}_s$. The aim is to reduce the distance between the approximation $f^*$ and the true solution $f$.

It was shown in \cite{deVore3} that the following approach introduced in \cite{PBDW} is optimal for this task. First, the minimization problem
\begin{equation}\label{Eq:DataAv*}
	g^* = \operatorname{argmin}_{g \in \R} || s - P_{\SM} g||^2,
\end{equation}
is solved. The solution to the reconstruction problem is then given by the mapping $A^*: \SM \rightarrow \mathcal{H}$ defined by
\begin{equation}
	A^*(P_{\SM}f) := f^* = s + P_{\SM^\perp} g^*.
\end{equation}

For the performance analysis, we want to compare the reconstruction algorithm $A^*$ with all other mappings  $A:\SM \rightarrow \mathcal H$. In the following, let $A$ represent any alternative reconstruction algorithm which also only depends on the information of $f$ in the sampling space $\SM$.

We first examine the \emph{instance optimality}. This means we analyse the error for a given measurement $s$, i.e. 
\begin{equation}
||f - A(s)|| \leq C_A(s)\operatorname{dist}(f,\R), \quad f \in \mathcal{K}_s.
\end{equation}
The algorithm $A$ which leads to the smallest constant $C_A(s)$ is called \emph{instance optimal}. It is clear that the error scales with the distance of the element $f$ from the reconstruction space $\R$. Due to the fact that we normally do not know $s$ a priori, this estimate is not very helpful. Hence, one is interested in the performance for any kind of input $s \in \mathcal{S}_M$. Therefore, the \emph{performance of a recovery algorithm $A$} on any subset $W \subset \mathcal{H}$ is given by
\begin{equation}
E_A(W) := \sup _{f \in W} || f - A(P_{\mathcal{S}_M}f) ||.
\end{equation}

Taking the infimum over the whole class gives the \emph{class optimal performance} on a given set $W$ defined by
\begin{equation}
E(W) := \inf_{A} E_A(W).
\end{equation}

It is shown in \cite{deVore3} that the presented algorithm is both instance and class optimal and gives the estimate
\begin{equation}\label{Eq:ErrorboundPBDW1}
||f - A^*(P_{\mathcal{S}_M}f)|| \leq \mu(\mathcal{R}_N,\mathcal{S}_M)\operatorname{dist}(f,\mathcal{R}_N).
\end{equation}
Hence, with this approach we do not get reasonable estimates, if $\mathcal{R}_N \cap \mathcal{S}_M^\perp \neq \left\{0\right\}$. Moreover, a detailed knowledge about the stable sampling rate is necessary to get a method which can be used in practice. A reconstruction method is of little help if we do not have good error bounds or if the condition number is too high. We will see in the next chapter that the condition number of this reconstruction method is also bounded if the stable sampling rate has been taken into account, see Equation \eqref{Eq:Kappa}. Next, it was shown in \cite{PBDW} that this estimate can even be improved to 
\begin{equation}\label{eq:est} 
||f - A^*(P_{\SM}f)|| \leq \mu(\R,\SM)\dist(f,\R \oplus (\SM \cap \R^\perp)).
\end{equation}
Note that \eqref{eq:est} demonstrates how the PBDW-method is dependent on the SSR. Moreover, it was shown in \cite{deVore3} that the constant $\mu(\R,\SM)$ cannot be improved. Thus, the estimate is sharp, which further demonstrates the importance of the SSR.

\subsubsection{Generalized Sampling}

After the examination of the concept of the PBDW-method, we want to study the different reconstruction technique \emph{generalized sampling}. Here, the approach is a stable improvement of concepts as finite section methods \cite{FiniteSection1, FiniteSectionGroechnig, FiniteSection3, FiniteSection4}. The main difference is that generalized sampling allows for different dimensions on $\SM$ and $\R$, whereas in the finite section method they are always the same. However, the finite section method becomes a special case of generalized sampling when the dimensions of the sampling space and reconstruction space are equal. The method is defined now. Afterwards, we discuss how the equation can be attained by solving the equivalent least square problem.

\begin{definition}[\cite{2DCase}]
	For $f \in \mathcal{H}$ and $N,M \in \mathbb{N}$ we define the reconstruction method of \emph{generalized sampling} $G_{N,M} : \mathcal{H} \rightarrow \mathcal{R}_N$ by
	\begin{equation}\label{EqGeneralSamp}
	\langle P_{\mathcal{S}_M} G_{N,M}(f), r_i \rangle = \langle P_{\mathcal{S}_M}f, r_i \rangle, \quad i = 1,\ldots,N .
	\end{equation}
	We also refer to $G_{N,M}(f)$ as the \emph{generalized sampling reconstruction} of $f$.
\end{definition}

We stress at this point that generalized sampling is also a linear reconstruction scheme. In particular, Equation \eqref{EqGeneralSamp} is equivalent to solving the following linear equation for $\alpha^{[N]} \in \mathbb{R}^N$. 
\begin{align}
U^{[N,M]} \alpha^{[N]} & = l(f) ^{[M]} ,
\end{align}
where
\begin{align}\label{Eq:DefU}
U^{[N,M]} & = \begin{pmatrix}
u_{11} & \ldots & u_{1N} \\
\vdots & \ddots & \vdots \\
u_{M1} & \ldots & u_{MN}
\end{pmatrix}  
\end{align}

and $u_{ij} = \langle  r_j, s_i \rangle ,~ l(f)^{[M]} = (l_1(f), \ldots, l_M(f)) \in \mathbb{R}^M$. The matrix can be seen in Figure \ref{fig:RecMatrix} for different sampling and reconstruction spaces. The reconstruction is given by $G_{N,M}(f) = \sum_{i=1}^{N} \alpha_i r_i$. An interesting fact about this is that the matrix $U^{[N,N]}$ is very ill-conditioned for most cases. As pointed out in \cite{linearity}, in the case of Fourier samples and wavelet reconstructions, the condition number grows exponentially in $N$.
This means that this approach is only feasible due to the above mentioned allowance of a different number of samples and reconstructed coefficients. It can be shown that there exists a certain number of samples such that \eqref{EqGeneralSamp} obeys a solution. 

\begin{theorem}[\cite{sharpBounds}]
	Let $N \in \mathbb{N}$. Then, there exists $M_0 \in \mathbb{N}$ such that for every $f \in \mathcal{H}$ Equation \eqref{EqGeneralSamp} has a unique solution $G_{N,M}(f)$ for all $M \geq M_0$. Moreover, the smallest $M_0$ is the least number such that $\cos(\omega(\mathcal{R}_N,\mathcal{S}_{M_0})) >0$.
\end{theorem}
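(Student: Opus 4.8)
The plan is to recast Equation \eqref{EqGeneralSamp} as a square linear system on the finite-dimensional space $\mathcal{R}_N$ and to characterise its invertibility through the subspace angle. Since $G_{N,M}(f) \in \mathcal{R}_N$ and the $r_i$, $i=1,\ldots,N$, span $\mathcal{R}_N$, Equation \eqref{EqGeneralSamp} is equivalent to the identity $P_{\mathcal{R}_N} P_{\mathcal{S}_M} G_{N,M}(f) = P_{\mathcal{R}_N} P_{\mathcal{S}_M} f$ in $\mathcal{R}_N$. Writing $T := P_{\mathcal{S}_M}|_{\mathcal{R}_N} : \mathcal{R}_N \to \mathcal{S}_M$, a short computation gives $T^\ast = P_{\mathcal{R}_N}|_{\mathcal{S}_M}$, so the equation reads $T^\ast T\, G_{N,M}(f) = T^\ast P_{\mathcal{S}_M} f$; note this is exactly the normal equation of the overdetermined least-squares problem with matrix $U^{[N,M]}$. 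As $T^\ast T$ is self-adjoint and positive semi-definite on the finite-dimensional space $\mathcal{R}_N$, it is invertible if and only if it is injective, and $\langle T^\ast T r, r\rangle = \|Tr\|^2$ shows that this holds if and only if $\ker T = \{0\}$, i.e. there is no nonzero $r \in \mathcal{R}_N$ with $P_{\mathcal{S}_M} r = 0$. By continuity of $r \mapsto \|P_{\mathcal{S}_M} r\|$ and compactness of the unit sphere of $\mathcal{R}_N$, this is equivalent to $\inf_{r\in\mathcal{R}_N,\,\|r\|=1}\|P_{\mathcal{S}_M}r\| = \cos(\omega(\mathcal{R}_N,\mathcal{S}_M)) > 0$, in which case $G_{N,M}(f) = (T^\ast T)^{-1} T^\ast P_{\mathcal{S}_M} f$ is the unique solution for every $f \in \mathcal{H}$.

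Next I would show that the set $\{M : \cos(\omega(\mathcal{R}_N,\mathcal{S}_M)) > 0\}$ is an upper set and non-empty. Because the sampling spaces are nested, $\mathcal{S}_M \subseteq \mathcal{S}_{M+1}$, one has $\|P_{\mathcal{S}_M} r\| \le \|P_{\mathcal{S}_{M+1}} r\|$ for all $r$, hence $\cos(\omega(\mathcal{R}_N,\mathcal{S}_M)) \le \cos(\omega(\mathcal{R}_N,\mathcal{S}_{M+1}))$, so positivity is preserved once attained. To exhibit one admissible $M$, I would invoke the standing completeness assumption of generalized sampling, $\mathcal{R}_N \cap \mathcal{S}^\perp = \{0\}$ with $\mathcal{S} = \overline{\operatorname{span}}\{s_i : i\in\mathbb{N}\}$, which by the same compactness argument applied to $P_{\mathcal{S}}$ yields $\cos(\omega(\mathcal{R}_N,\mathcal{S})) > 0$. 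Since $\|P_{\mathcal{S}_M} r\|^2 \nearrow \|P_{\mathcal{S}} r\|^2$ monotonically and pointwise on the compact unit sphere of $\mathcal{R}_N$, Dini's theorem gives uniform convergence, so $\cos(\omega(\mathcal{R}_N,\mathcal{S}_M)) \to \cos(\omega(\mathcal{R}_N,\mathcal{S})) > 0$ and the condition holds for all sufficiently large $M$. Setting $M_0 := \min\{M \in \mathbb{N} : \cos(\omega(\mathcal{R}_N,\mathcal{S}_M)) > 0\}$, which is now well defined, the monotonicity shows \eqref{EqGeneralSamp} has a unique solution for every $M \ge M_0$, and $M_0$ is the least such number by construction.

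The only genuinely non-formal point---hence the expected main obstacle---is establishing that $\cos(\omega(\mathcal{R}_N,\mathcal{S}_M))$ is strictly positive for at least one $M$; everything else reduces to finite-dimensional linear algebra and the monotonicity of orthogonal projections onto an increasing chain of subspaces. Making this precise requires being explicit that the sampling system is complete for the reconstruction space, i.e. $\mathcal{R}_N \cap \mathcal{S}^\perp = \{0\}$, which is part of the generalized sampling framework; one should also verify that the Dini-type uniform-convergence step is really the cleanest route, since finite-dimensionality of $\mathcal{R}_N$ already forces $\cos(\omega(\mathcal{R}_N,\mathcal{S}_M))$ to converge to $\cos(\omega(\mathcal{R}_N,\mathcal{S}))$, the only subtlety being the rate.
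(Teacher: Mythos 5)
The paper does not actually prove this theorem---it is quoted verbatim from \cite{sharpBounds}---so there is no in-text argument to compare against; your proof is correct and is essentially the standard one from that reference. The reduction of \eqref{EqGeneralSamp} to the normal equations $T^\ast T g = T^\ast P_{\mathcal{S}_M}f$ with $T = P_{\mathcal{S}_M}|_{\mathcal{R}_N}$, the equivalence of $\ker T = \{0\}$ with $\cos(\omega(\mathcal{R}_N,\mathcal{S}_M))>0$ via compactness of the unit sphere of the finite-dimensional space $\mathcal{R}_N$, and the monotonicity of $M \mapsto \|P_{\mathcal{S}_M}r\|$ over the nested sampling spaces are all sound, and together they also give the minimality of $M_0$ (when $\cos(\omega)=0$ the kernel is nontrivial, so uniqueness fails, e.g.\ already for $f=0$). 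You are likewise right that the one substantive hypothesis is the standing assumption $\mathcal{R}_N\cap\mathcal{S}^\perp=\{0\}$ of the generalized sampling framework, which this paper leaves implicit; without it the existence of $M_0$ is simply false (take every $s_i$ orthogonal to $\mathcal{R}_N$), so flagging it is appropriate rather than a gap, and your Dini-type argument (or any argument using strong convergence $P_{\mathcal{S}_M}\to P_{\mathcal{S}}$ together with finite-dimensionality of $\mathcal{R}_N$) closes that step correctly.
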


Just as for the PBDW-method, the performance bounds of generalized sampling were studied. Here we observe again that the reconstruction quality highly depends on the subspace angle and on the relation between the data and the reconstruction space.

\begin{theorem}[\cite{sharpBounds}]\label{Th:accuracyGS}
	Retaining the definitions and notations from this chapter, for all $f \in \mathcal{H}$ we have 
	\begin{equation}
	|| G_{N,M}(f)|| \leq \mu(\mathcal{R}_N,\mathcal{S}_M) ||f||,
	\end{equation}
	and 
	\begin{equation}
	|| f - P_{\mathcal{R}_N} f|| \leq || f - G_{N,M}(f)|| \leq \mu(\mathcal{R}_N,\mathcal{S}_M) || f - P_{\mathcal{R}_N} f|| .
	\end{equation}
	In particular, these bounds are sharp.
\end{theorem}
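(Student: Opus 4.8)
The plan is to recognise $G_{N,M}$ as a bounded linear projection onto $\R$ and to read off both inequalities from its defining Galerkin-type orthogonality together with the subspace angle. Throughout write $P=P_{\SM}$, $\omega=\omega(\R,\SM)$ and $\mu=\mu(\R,\SM)=1/\cos\omega$, and assume $\cos\omega>0$ (otherwise $\mu=\infty$ and there is nothing to prove; the existence statement quoted above then guarantees that $G_{N,M}$ is well defined). Since $G_{N,M}(f)\in\R$, the generalized sampling equation \eqref{EqGeneralSamp} is equivalent to
\begin{equation}\label{eq:galerkin}
G_{N,M}(f)\in\R,\qquad \langle P\,G_{N,M}(f),r\rangle=\langle Pf,r\rangle\quad\text{for all }r\in\R .
\end{equation}
Hence $G_{N,M}$ is linear, and because any $g\in\R$ solves \eqref{eq:galerkin} with $f$ replaced by $g$, uniqueness gives $G_{N,M}(g)=g$; thus $G_{N,M}$ is a projection with range $\R$ (neither $0$ nor the identity unless the statement is trivial).

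For the norm bound, test \eqref{eq:galerkin} with $g:=G_{N,M}(f)$. Since $P$ is an orthogonal projection, $\langle Pg,g\rangle=\|Pg\|^2$ and $\langle Pf,g\rangle=\langle Pf,Pg\rangle\le\|f\|\,\|Pg\|$, so $\|Pg\|\le\|f\|$; as $g\in\R$, the definition of $\omega$ gives $\cos\omega\,\|g\|\le\|Pg\|$, whence $\|G_{N,M}(f)\|\le\mu\|f\|$. This is sharp: for $r^{\star}\in\R$ with $\|r^{\star}\|=1$ and $\|Pr^{\star}\|$ arbitrarily close to $\cos\omega$, the element $r^{\star}$ solves \eqref{eq:galerkin} for the input $f=Pr^{\star}$, so $G_{N,M}(f)=r^{\star}$ and $\|G_{N,M}(f)\|/\|f\|=1/\|Pr^{\star}\|\to\mu$.

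For the accuracy estimate the lower bound is the best-approximation property of $P_{\R}$. For the upper bound set $h:=f-P_{\R}f$, so that $h\perp\R$; since $G_{N,M}$ is linear and fixes $\R$, $e:=G_{N,M}(h)=G_{N,M}(f)-P_{\R}f\in\R$, so $f-G_{N,M}(f)=h-e$ and, $h$ and $e$ being orthogonal, $\|f-G_{N,M}(f)\|^2=\|h\|^2+\|e\|^2$. It thus suffices to show $\|e\|\le\tan\omega\,\|h\|=\sqrt{\mu^2-1}\,\|h\|$. The crucial auxiliary estimate is that $\|(I-P_{\R})Pr\|\le\sin\omega\,\|Pr\|$ for all $r\in\R$: from $\langle P_{\R}Pr,r\rangle=\langle Pr,r\rangle=\|Pr\|^2$ and $\|r\|\le\mu\|Pr\|$ one gets $\|P_{\R}Pr\|\ge\cos\omega\,\|Pr\|$, hence $\|(I-P_{\R})Pr\|^2=\|Pr\|^2-\|P_{\R}Pr\|^2\le\sin^2\omega\,\|Pr\|^2$. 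Now testing \eqref{eq:galerkin} for the input $h$ with $r=e$ gives $\|Pe\|^2=\langle Ph,e\rangle=\langle h,Pe\rangle$; using $\langle h,e\rangle=0$ and then $h\perp\R$ together with $(I-P_{\R})(I-P)e=-(I-P_{\R})Pe$ (valid since $e\in\R$), this rewrites as $\|Pe\|^2=\langle h,(I-P_{\R})Pe\rangle\le\sin\omega\,\|h\|\,\|Pe\|$, so $\|Pe\|\le\sin\omega\,\|h\|$ and $\|e\|\le\|Pe\|/\cos\omega\le\tan\omega\,\|h\|$. Substituting back yields $\|f-G_{N,M}(f)\|^2\le\mu^2\|h\|^2$.

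The step I expect to be the main obstacle is obtaining the optimal constant here. The naive bound $\|Ph\|\le\|h\|$ only produces $\|e\|\le(\sin\omega/\cos^2\omega)\,\|h\|$ and hence the suboptimal constant $\sqrt{\mu^4-\mu^2+1}>\mu$; the sharpening comes exactly from the auxiliary estimate, i.e.\ from noticing that $h\perp\R$ forces $h$ to pair only with the $\R^{\perp}$-part of $(I-P)e$, whose norm is $\le\sin\omega\,\|Pe\|$ rather than $\|Pe\|$. Alternatively, one may bypass the computation: $G_{N,M}$ is a non-trivial idempotent on a Hilbert space, hence $\|I-G_{N,M}\|=\|G_{N,M}\|$ by the classical norm identity for idempotents, and the accuracy bound follows from $\|G_{N,M}\|\le\mu$ via $f-G_{N,M}(f)=(I-G_{N,M})(f-P_{\R}f)$. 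Sharpness of the accuracy bound is then checked by exhibiting $h\perp\R$ making all the above inequalities equalities; already the two-dimensional model $\R=\operatorname{span}\{e_1\}$, $\SM=\operatorname{span}\{(\cos\omega,\sin\omega)\}\subset\mathbb{R}^2$ realises $\|f-G_{N,M}(f)\|=\mu\,\dist(f,\R)$.
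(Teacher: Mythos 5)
The paper offers no proof of this statement --- it is imported verbatim from \cite{sharpBounds} --- so there is no internal argument to compare against; what can be said is that your proof is correct and is, in substance, the argument of that reference. Your verification of the first bound (testing the Galerkin identity with $g=G_{N,M}(f)$ to get $\|P_{\mathcal{S}_M}g\|\le\|f\|$, then dividing by $\cos\omega$) is exactly right, as is the lower error bound. For the upper error bound, the decomposition $f-G_{N,M}(f)=h-e$ with $h=f-P_{\mathcal{R}_N}f\perp\mathcal{R}_N$ and $e=G_{N,M}(h)\in\mathcal{R}_N$, combined with the auxiliary estimate $\|(I-P_{\mathcal{R}_N})P_{\mathcal{S}_M}r\|\le\sin\omega\,\|P_{\mathcal{S}_M}r\|$ for $r\in\mathcal{R}_N$, is precisely the step that rescues the sharp constant $\mu$ from the naive $\sqrt{\mu^4-\mu^2+1}$; I checked each inequality and they hold. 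The alternative you mention --- $f-G_{N,M}(f)=(I-G_{N,M})(f-P_{\mathcal{R}_N}f)$ together with the idempotent norm identity $\|I-G_{N,M}\|=\|G_{N,M}\|$ --- is in fact how \cite{sharpBounds} obtains the sharp constant, so your direct computation is a self-contained substitute for that classical lemma. Two minor points. Your sharpness check for the error bound is carried out only in a two-dimensional model, which shows $\mu$ is the best constant as a function of the angle but not that it is attained for the given pair $(\mathcal{R}_N,\mathcal{S}_M)$; for that, use the idempotent route (since $\mathcal{R}_N$ is finite dimensional the infimum defining the angle is attained, $\|G_{N,M}\|=\mu$ exactly, and $I-G_{N,M}$ already attains its norm on $\mathcal{R}_N^{\perp}$ because it annihilates $\mathcal{R}_N$), or trace equality through your chain with $e$ proportional to a minimizer of the angle. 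And you correctly dispose of the degenerate case $\mu=\infty$ at the outset.
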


Additionally, it was shown in \cite{GS} that the condition number $\kappa(\mathcal R_N, \mathcal S_M)$ of the reconstruction method can also be controlled by the SSR, i.e.
\begin{equation}\label{Eq:Kappa}
	\kappa(\mathcal R_N, \mathcal S_M) =\mu(\mathcal R_N, \mathcal S_M).
\end{equation}
Hence, not only the accuracy but also the stability is controlled by the SSR.

We conclude that along with mappings that map into the reconstruction space $\R$, generalized sampling is also optimal in terms of achieving a low condition number and high accuracy.

\subsubsection{Non-linear extension of generalized sampling}

In \cite{Clarice} a consistent approach to generalized sampling was introduced. This means that, in contrast to generalized sampling, the output of the reconstruction method has the same measurements in the sampling space as the input. Let $f \in \mathcal{R}$, then it can be represented as $f = \sum_{i=1}^\infty \beta_i r_i$. The measurements can be written as $l(f)^{[M]} = P_{[M]} U \beta$, where $P_{[M]}$ is the orthogonal projection onto space spanned by the first $M$ elements of the canonical bases of $\ell^2(\mathbb N)$, and $U$ is defined as in \eqref{Eq:DefU}. The introduced method solves the non-linear minimization problem
\begin{equation}\label{Eq:infGSdef}
	\inf_{\alpha \in \ell_1} || \alpha ||_{\ell_1} \text{ with } P_{[M]} U \alpha = P_{[M]} U \beta.
\end{equation}
The solution is given by
\begin{equation}
	EG_M(f) = \sum_{i=1}^\infty \alpha_i r_i.
\end{equation}
The measurements of $EG_M(f)$ and $f$ are naturally equal, because of $P_{\mathcal S_M} EG_M(f) = P_{[M]} U \alpha = P_{[M]} U \beta = P_{\mathcal{S}_M} f$. Hence, this approach is consistent and maps into the reconstruction space. In contrast to generalized sampling, it is not necessary to decide the number of reconstructed coefficients a priori.

In the setting of arbitrary sampling and reconstruction spaces, it was shown that for every number of reconstructed coefficients $N$ there exists some $M_0$ such that the method reconstructs the data correct up to its first $N$ coefficients. Hence, this approach is convergent as the error is given by $\mathcal{O}(||P_{\R}^\perp f||)$, which goes to zero for $M \rightarrow \infty$. The speed of convergence is then dependent upon the sampling and reconstruction space. This speed was analysed in \cite{Clarice} for the case of Fourier measurements and wavelet reconstruction. Let the reconstruction space $\mathcal{R}$ be given by wavelets and the sampling space $\mathcal{S}$ be the space representing Fourier measurements. Then, for $\beta \in \ell_1(\mathbb{N})$ and $N \in \mathbb{N}$ the following holds:
	\begin{enumerate}
	\item If for some $D >0$ and $\gamma \geq 1$, the Fourier transform of the scaling function $\phi$ decays with
	\begin{equation}
		|\hat{\phi}(\xi)| \leq \frac{D}{(1+|\xi|)^\gamma}, \quad \xi \in \mathbb{R}
	\end{equation}
	then there exists some constant $C$ independent of $N$ (but dependent on $\gamma$ and $\epsilon$) such that for $M = C N^{1+1/(2\gamma -1)}$, any solution $\alpha$ to \eqref{Eq:infGSdef} satisfies
	\begin{equation}
		|| \alpha - \beta ||_{\ell_1} \leq 6 || P_{[N]}^\perp \beta ||_{\ell_1}.
	\end{equation}
	\item If for $k =0,1,2$, for some $D >0$ and $\gamma \geq 1.5$, the Fourier transform of the scaling function $\phi$, the wavelet $\psi$ and their first two derivatives decays with
	\begin{equation}
		| \hat{\phi}^{(k)} (\xi)| \leq \frac{D}{(1+|\xi|)^\gamma},~ | \hat{\psi}^{(k)} (\xi)| \leq \frac{D}{(1+|\xi|)^\gamma}, \quad \xi \in \mathbb{R},
	\end{equation}
	then there exists some constant $C$ independent of $N$ (but dependent on $\phi$, $\psi$ and $\epsilon$) such that for $M = CN$, any solution $\alpha$ to \eqref{Eq:infGSdef} satisfies
	\begin{equation}
		||\alpha - \beta ||_{\ell_1} \leq 6 || P_{[N]}^\perp \beta ||_{\ell_1}.
	\end{equation}
	\end{enumerate}

For Dauchechies wavelets among other wavelets, the assumptions of this result are satisfied by their construction. The first assumption is fulfilled by all Daubechies wavelets and the second one is fulfilled for Daubechies wavelets of $7$ or more vanishing moments. Numerical experiments suggest that even Daubechies wavelets with less vanishing moments might have a linear relationship \cite{Clarice}.

\subsubsection{Comparison}\label{Ch:Comparison}

In this chapter we want to point out the similarity of generalized sampling and the PBDW-method in terms of optimality and the dependence on the SSR. Additionally, we want to discuss the differences concerning the output space and the consistency. We also include the non-linear extension of generalized sampling into this comparison and examine the overall robustness.
	
In Equation \eqref{eq:est} and Theorem \ref{Th:accuracyGS} we see that the error bounds of generalized sampling and the PBDW-method depend both on $\mu(\R,\SM)$. Moreover, since both systems solve the same least square problem 
\begin{equation}
	U^{[N,M]} \alpha^{[N]} = l(f)^{[M]}
\end{equation}
as part of the reconstruction, they also have the same condition number $\kappa(\R,\SM)$. In more detail, we have that the PBDW-method first solves the generalized sampling problem with a solution $v^*$, see Equation \eqref{Eq:DataAv*}. Then, the solution is tweaked to be consistent, i.e.
\begin{equation}
f^* = g^* + P_{\SM} f - P_{\SM} g^*.
\end{equation}
Moreover, both methods have been proven to be optimal in their setting. This means that linear methods cannot outperform them. Additionally, the bounds are sharp. This underlines the importance of the analysis of the stable sampling rate. Therefore, we dedicate \S \ref{CH:SSR} to the investigation of the relation between the dimension of the sampling and the reconstruction space to bound the subspace angle.

Besides the similarities both methods take different parts into consideration for the construction of the algorithms. Generalized sampling ensures that the output is contained in the reconstruction space. This is very desirable when the signal is sparse in the reconstruction bases, because we do not get the artefacts from the measurements in the output signal. Nevertheless, the approach is not consistent. This means that the projection onto the sampling space of the input and output may not be equal.

This is overcome with the non-linear approach presented in \cite{Clarice}. For this gain in consistency one, unfortunately, has to compromise with the robustness. Stability is considered in the $\ell_1$ setting, and the used definition is equal to the existence of the condition number of the method. This means that the problem is well-conditioned in terms of solving an $\ell_1$ minimization problem rather than the reconstruction of $f$ from $M$ samples. Particularly, this means that the problem is not robust against noise.

Finally, the PBDW method is linear and consistent. This is possible due to the fact that the solution is not forced to stay in the reconstruction space $\R$. Remark that \eqref{eq:est} shows that the error gets smaller with larger $M$ even though we keep the reconstruction space $\R$ the same. Hence, with increasing $\SM$ we are leaving the reconstruction space $\R$ and get further away as $M$ increases. This can be desirable, if the artefacts from the sampling space are not too severe, as it allows us a mix of properties from the sampling and the reconstruction space. Nevertheless, if the function is very sparse in the reconstruction space and has a lot of artefacts in the sampling space, this approach leads to less impressive reconstructions. The impact is further illustrated in \S \ref{Ch:Numerics} Numerical experiments.

\section{Stable Sampling Rate}\label{CH:SSR}

\subsection{Linearity of the Stable Sampling Rate}\label{CH:Linearity}

In \S \ref{CH:ReconstructionTechniques} we saw that the subspace angle between the sampling and the reconstruction space controls the reconstruction accuracy. Therefore, one is interested in the relation between the number of samples and the number of reconstructed coefficients, such that the subspace angle is bounded. In detail, we are interested in the stable sampling rate:
\begin{equation}\label{ss_rate}
\Theta(N, \theta) = \min \left\{ M \in \mathbb{N}: \mu(\mathcal{R}_N,\mathcal{S}_M)  < \theta \right\}.
\end{equation}
The stable sampling rate has been analysed for important cases which appear frequently in practice, i.e. for the Fourier-Wavelet \cite{linearity}, Fourier-Polynomial bases \cite{hrycakIPRM} and Walsh-Wavelet cases \cite{WalshSSR}. For the reconstruction with wavelets we get for both Fourier and Walsh sampling that the stable sampling rate is linear. This is the best relation one can wish for and means that the methods discussed above are, up to a constant, as good as if one could access the wavelet coefficients directly. In particular, for the Walsh case we have the following theorem.

\begin{theorem}[\cite{WalshSSR}]\label{Th:SSR}
	Let $\mathcal{S}$ and $\mathcal{R}$ be the sampling and reconstruction space spanned by the d-dimensional Walsh functions and separable boundary wavelets respectively. Moreover, let $N = 2^{dR}$ with $R \in \mathbb{N}$. Then for all $\theta \in (1, \infty)$ there exists $S_\theta$ such that for all $M \geq 2^{dR}S_\theta$ we have $\mu(\mathcal{R}_N,\mathcal{S}_M) \leq \theta$. In particular one gets $\Theta \leq S_\theta N$. Hence, the relation $\Theta(N;\theta) = \mathcal{O}(N)$ holds for all $\theta \in (1,\infty)$.
\end{theorem}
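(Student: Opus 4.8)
The plan is to isolate one scale-invariant one-dimensional estimate — that a level-$R$ boundary Daubechies scaling function is approximated in $L^2$ by dyadic step functions at mesh $2^{-(R+t)}$ with an error that is small, uniformly in $R$, once the integer $t$ is large — and to feed it into the identity $\mu(\R,\SM)^{-1}=\cos(\omega(\R,\SM))$ after two routine reductions.

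\emph{Reductions.} First, tensorisation. Write $\mathcal{R}_N=V_R^{b,d}=(V_R^b)^{\otimes d}$ and, for $M=m^d$, $\SM=(\mathcal{S}_m)^{\otimes d}$, so $P_{\SM}=(P_{\mathcal{S}_m})^{\otimes d}$. A telescoping identity gives the operator inequality
\begin{equation}
I-P_{\SM}=\sum_{k=1}^{d}(P_{\mathcal{S}_m})^{\otimes(k-1)}\otimes(I-P_{\mathcal{S}_m})\otimes I^{\otimes(d-k)}\preceq\sum_{k=1}^{d}I^{\otimes(k-1)}\otimes(I-P_{\mathcal{S}_m})\otimes I^{\otimes(d-k)},
\end{equation}
and evaluating $\langle(I-P_{\SM})f,f\rangle$ on $f\in(V_R^b)^{\otimes d}$ through a singular value decomposition along the $k$-th factor (whose left singular vectors lie in $V_R^b$) yields
\begin{equation}
\cos^2(\omega(\R,\SM))\ge 1-d\,\sin^2\!\big(\omega(V_R^b,\mathcal{S}_m)\big).
\end{equation}
Second, monotonicity: since $\mathcal{S}_{2^k}\subseteq\mathcal{S}_m$ whenever $2^k\le m$, it suffices to bound $\sin^2(\omega(V_R^b,\mathcal{S}_{2^k}))$ for powers of two $m=2^k$.

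\emph{Core one-dimensional estimate.} The structure of the Walsh system enters through the fact that $\mathcal{S}_{2^k}$ contains (up to the constant function, which the binary model retains by convention, and an $O(1)$ shift of $k$) the space $\Sigma_k\subset L^2([0,1])$ of functions constant on every dyadic interval of length $2^{-k}$, equivalently the Haar scaling space at level $k$: the first $2^k$ Walsh functions and the first $2^k$ Haar functions span the same subspace — a standard fact in Walsh analysis, used also in \cite{WalshSSR}. Hence, on $V_R^b$, $\|P_{\mathcal{S}_{2^k}^\perp}f\|\le\|(I-P_{\Sigma_k})f\|$, where $I-P_{\Sigma_k}$ is the error of best $L^2$ approximation by step functions of mesh $2^{-k}$; this operator is \emph{local}, i.e.\ $(I-P_{\Sigma_k})g$ vanishes on every dyadic cell on which $g$ vanishes. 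Take a unit $f=\sum_j\beta_j\phi_{R,j}\in V_R^b$, the sum running over interior translates and the finitely many boundary-corrected types, with $\sum_j|\beta_j|^2=1$ since $\{\phi_{R,j}\}$ is orthonormal. Each $\phi_{R,j}$ is an affine rescaling $x\mapsto 2^{R/2}\phi(2^Rx-j)$ of one of finitely many fixed $L^2$ functions, and this rescaling carries the mesh $2^{-k}$ to the mesh $2^{-(k-R)}$; since $\bigcup_t\Sigma_t$ is dense in $L^2$, the quantity $\eta(t):=\max_\phi\|(I-P_{\Sigma_t})\phi\|_{L^2}$ (maximum over the fixed scaling-function types) satisfies $\eta(t)\to0$ as $t\to\infty$, and $\|(I-P_{\Sigma_k})\phi_{R,j}\|=\eta(k-R)$ independently of $R$ and $j$. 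Because the $\phi_{R,j}$ have compact, boundedly overlapping supports (overlap bounded by a constant $C_p$ depending only on the Daubechies order $p$), so do the $(I-P_{\Sigma_k})\phi_{R,j}$, whence
\begin{equation}
\big\|P_{\mathcal{S}_{2^k}^\perp}f\big\|^2\le\Big\|\sum_j\beta_j(I-P_{\Sigma_k})\phi_{R,j}\Big\|^2\le C_p\sum_j|\beta_j|^2\,\eta(k-R)^2= C_p\,\eta(k-R)^2 ,
\end{equation}
so $\sin^2(\omega(V_R^b,\mathcal{S}_{2^k}))\le C_p\,\eta(k-R)^2$, depending on $R$ and $k$ only through $k-R$.

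\emph{Conclusion and main difficulty.} Given $\theta$ and $d$, choose $t_0=t_0(\theta,d,p)$ with $d\,C_p\,\eta(t_0)^2\le 1-\theta^{-2}$ (possible as $\eta(t)\to0$) and set $S_\theta:=2^{dt_0}$. For $M\ge 2^{dR}S_\theta$ one may take $m$ with $2^{R+t_0}\le m$ and $m^d\le M$; combining the three displays gives $\cos^2(\omega(\R,\SM))\ge\theta^{-2}$, i.e.\ $\mu(\R,\SM)\le\theta$, hence $\Theta(N,\theta)\le S_\theta N$ and $\Theta(N;\theta)=\mathcal{O}(N)$. For Haar, $p=1$, one has $\phi=\mathbf 1_{[0,1]}\in\Sigma_0$, so $\eta\equiv0$ and the estimate degenerates to the sharp relation pursued later in the paper. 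The step I expect to be the genuine obstacle is making $\eta(t)\to0$ quantitative and truly \emph{uniform} in $R$: one must combine the exact self-similarity between levels with the bounded-overlap argument above so as not to lose a fatal factor $\sqrt N$ when passing from a single scaling function to a general element of $V_R^b$, and one must put the finitely many boundary-corrected functions on the same footing as the interior translates. A secondary point is pinning down precisely the containment $\Sigma_k\subseteq\mathcal{S}_{2^k}$; this is exactly where the compatibility of binary (Walsh) sampling with a \emph{dyadic} multiresolution is exploited, and it is the reason the constant $S_\theta$ — unlike in the Fourier case — needs no control on wavelet smoothness beyond the support length $p$.
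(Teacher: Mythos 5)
The paper itself does not prove Theorem \ref{Th:SSR} --- it is imported from \cite{WalshSSR}, and the only proofs given here are the Haar-specific computations of \S\ref{Ch:HaarWalsh} --- so your argument can only be compared against that reference and against the paper's Haar lemmas. As far as I can check, your proof is correct, and it takes a genuinely different route. The approach in \cite{WalshSSR} (and the spirit of Lemmas \ref{Th:HaarWavDec} and \ref{Lem:HaarScalDec} here) is to estimate the individual Walsh coefficients $\langle \phi_{R,j},\Wal(n,\cdot)\rangle$ and bound $\|P_{\SM}^\perp r\|^2$ by summing the tail $n\geq m$; your key observation --- that the span of the first $2^k$ Walsh functions is exactly the space $\Sigma_k$ of dyadic step functions at mesh $2^{-k}$, so that $P_{\mathcal{S}_{2^k}}$ is a local conditional-expectation operator --- replaces all coefficient decay estimates by the single soft fact $\eta(t)=\|(I-P_{\Sigma_t})\phi\|\to 0$, valid for any compactly supported $\phi\in L^2$ by density of dyadic step functions. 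Dyadic self-similarity then gives uniformity in $R$ and $j$ for free, the bounded-overlap/Cauchy--Schwarz step does avoid the $\sqrt N$ loss you worry about, and the telescoping-plus-Schmidt-decomposition tensorisation $\cos^2(\omega(\R,\SM))\geq 1-d\,\sin^2(\omega(V_R^b,\mathcal{S}_m))$ is sound. What your route buys is brevity and a structural explanation of why no smoothness of the wavelet enters (only the support length, via $C_p$, and the dyadic compatibility of the multiresolution with binary sampling); what it gives up is any quantitative handle on $S_\theta$, since density alone gives no rate for $\eta(t)$, whereas the coefficient-decay route yields the explicit slopes reported in Figure \ref{fig:RecMatrix} (e.g.\ $S_2=2$ for db8). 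Three small points to tidy: with the paper's indexing $n_k=1,\ldots,m$ the space $\mathcal{S}_{2^k}$ omits $\Wal(0,\cdot)$, so the containment should be stated as $\Sigma_k\subset\mathcal{S}_{2^{k+1}}$, costing only a factor $2^d$ in $S_\theta$; for the reflected boundary functions $\phi^\#$ the identity $\|(I-P_{\Sigma_k})\phi^\#_{R,j}\|=\eta(k-R)$ uses that reflection about the dyadic point $1$ preserves the dyadic mesh, which deserves a sentence; and it is worth noting that Lemma \ref{Lem:HaarScalDec} is precisely the degenerate case $\eta\equiv 0$ of your estimate, consistent with Theorem \ref{Th:SSRHaar}.
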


A natural question which arises from this theorem is whether it is possible to give sharp bounds on the constant $S_\theta$. In Figure \ref{fig:RecMatrix} we can see the stable sampling rate for different Wavelets and the bound $\theta =2$. The slope $S_\theta$ is unknown in most cases and very difficult to find. This comes from the fact that for the majority of wavelets the reconstruction matrix is not perfectly block diagonal, as in \ref{fig:RecMatrixDB2D1} and \ref{fig:RecMatrixDB8D1}. Hence, one has to take the off diagonals into consideration. The numerics suggest that the slope is higher the further away the reconstruction matrix gets from block diagonal. Only for the case of Haar wavelets and Walsh functions we get that the reconstruction matrix is perfectly block diagonal. This can be seen in \ref{fig:RecHaarWalsh}. Note, that from the numerical example one may deduce that $S_\theta =1$. This is indeed the case, and the analysis detailed below establishes that $S_\theta =1$ for all $\theta \in (1,\infty). $

\begin{figure}
	\centering
	\subfloat[Haar Wavelet where $\Theta(N,2) = N$]{\includegraphics[width=0.45\textwidth]{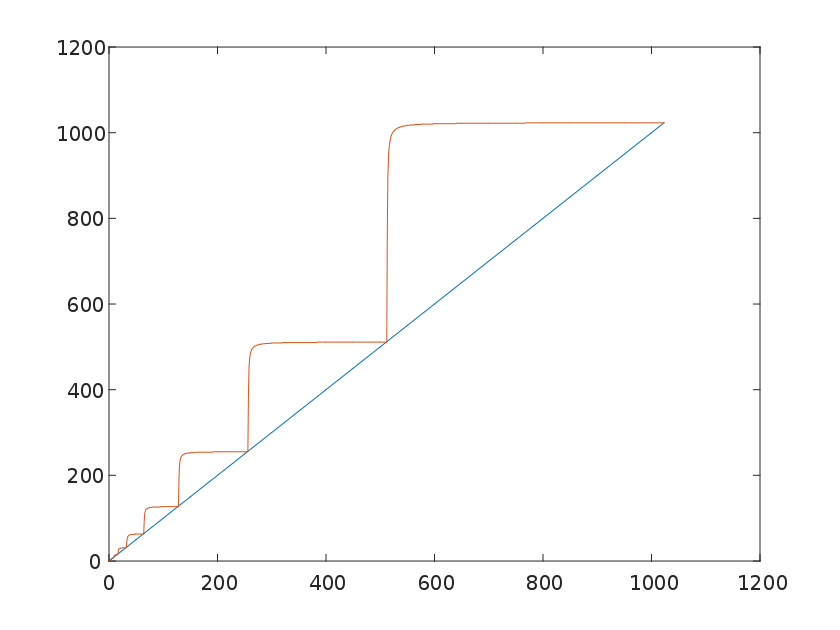}}\quad
	\subfloat[Haar-Walsh]{\includegraphics[width=0.45\textwidth]{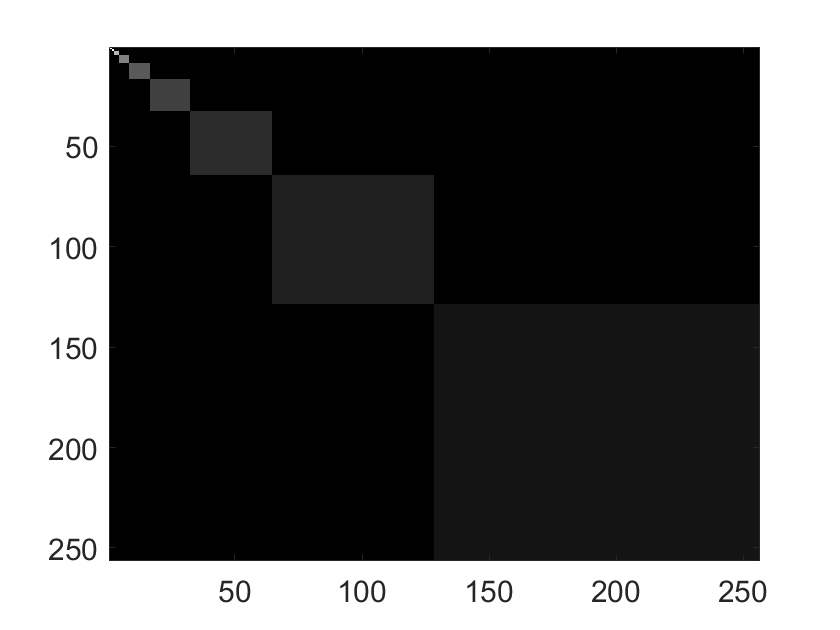}\label{fig:RecHaarWalsh}} \\
	\subfloat[Daubechies $2$ Wavelet where $\Theta(N,2) = 1.49N$]{\includegraphics[width=0.45\textwidth]{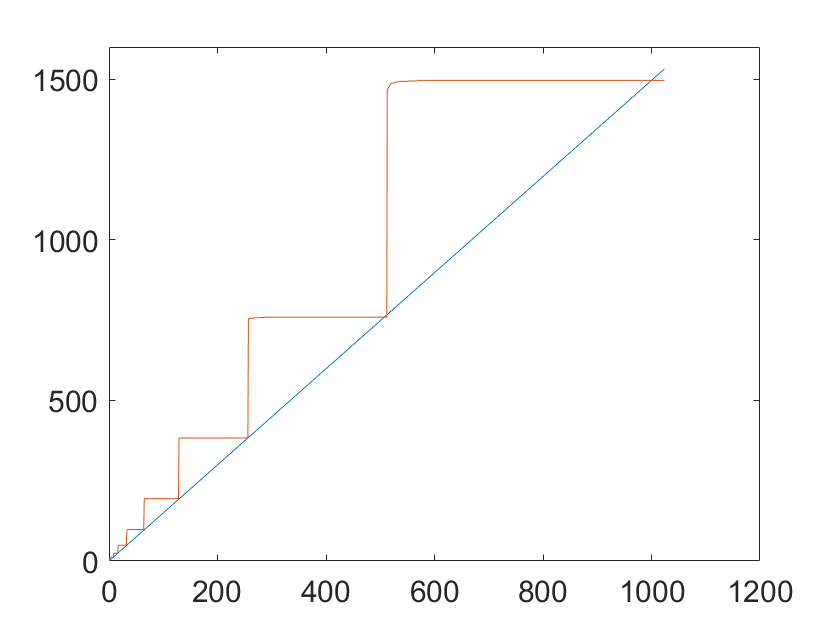}} \quad
	\subfloat[db$2$ - Walsh]{\includegraphics[width=0.45\textwidth]{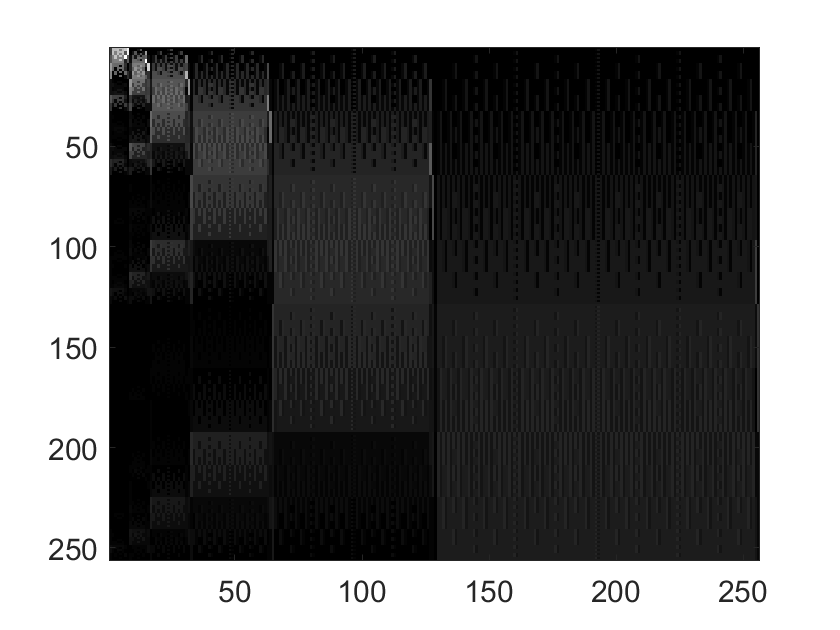}\label{fig:RecMatrixDB2D1}} \\
	\subfloat[Daubechies $8$ Wavelet where $\Theta(N,2) = 2N$]{\includegraphics[width=0.45\textwidth]{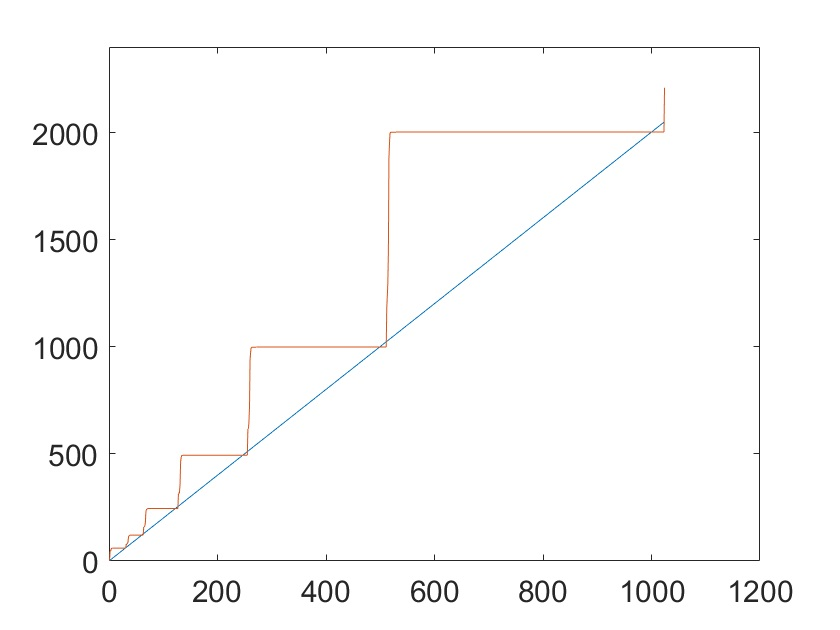}} \quad
	\subfloat[db$8$ - Walsh ]{\includegraphics[width=0.45\textwidth]{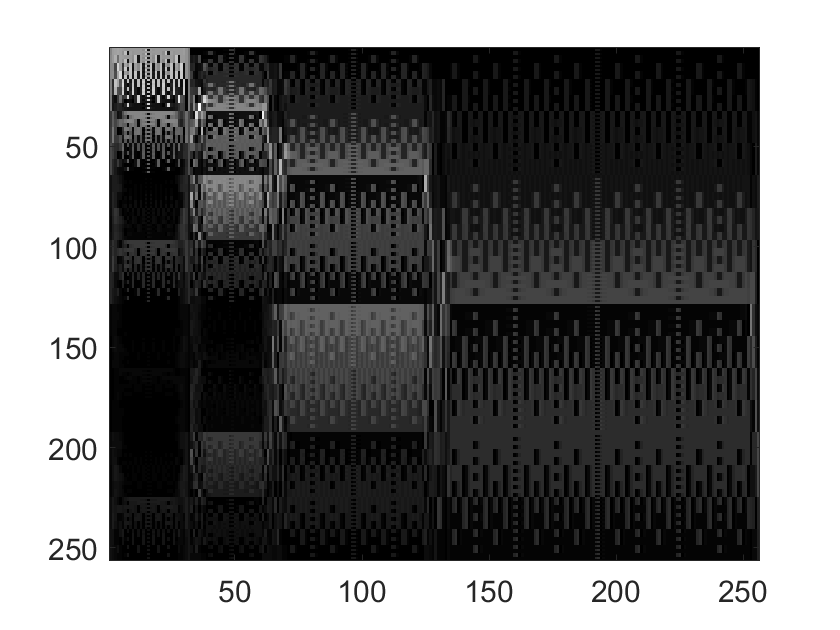}\label{fig:RecMatrixDB8D1}}
	\caption{Stable sampling rate for $\theta =2$ and reconstruction matrix}\label{fig:RecMatrix}
\end{figure}

\subsection{Sharpness for the Haar wavelet - Walsh case}\label{Ch:HaarWalsh}

The sharp bound on $S_\theta$ can be summarised in the following theorem.

\begin{theorem}\label{Th:SSRHaar}
Let the sampling space $\mathcal{S}$ be spanned by the Walsh functions and the reconstruction space $\mathcal{R}$ by the Haar wavelets in $L^2([0,1]^d)$. If $N = 2^{dR}$ for some $R \in \mathbb{N}$, then for every $\theta \in (1,\infty)$ we have that the stable sampling rate is the identity, i.e. $\Theta(N, \theta) = N$. 
\end{theorem}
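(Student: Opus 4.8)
\emph{Proof idea.} The plan is to establish the two inequalities $\Theta(N,\theta)\ge N$ and $\Theta(N,\theta)\le N$ separately. The lower bound is a soft dimension count. For any $M<N=2^{dR}$ the sampling space $\SM$ is spanned by $M$ of the (orthonormal, hence linearly independent) Walsh functions, so $\dim\SM\le M<N=\dim\R$; thus $P_{\SM}|_{\R}$ has a nontrivial kernel, some unit vector $r\in\R$ satisfies $P_{\SM}r=0$, and therefore $\cos(\omega(\R,\SM))=0$ and $\mu(\R,\SM)=\infty$. Hence no $M<N$ meets the requirement $\mu<\theta$, giving $\Theta(N,\theta)\ge N$. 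For the upper bound I would prove the sharper statement $\R=\mathcal{S}_N$; this immediately yields $P_{\mathcal{S}_N}r=r$ for every $r\in\R$, hence $\cos(\omega(\R,\mathcal{S}_N))=1$, i.e. $\mu(\R,\mathcal{S}_N)=1<\theta$ for every $\theta\in(1,\infty)$, so $\Theta(N,\theta)\le N$. Together the two bounds give $\Theta(N,\theta)=N$, so the whole content of the theorem reduces to the identity $\R=\mathcal{S}_N$.

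To prove $\R=\mathcal{S}_N$ I would first settle $d=1$ and then pass to tensor products. In one dimension the Haar scaling function $\phi=\mathbf 1_{[0,1)}$ already has support in $[0,1]$, so the boundary correction of \cite{boundaryWavelets} is trivial and $V_R^b=V_R$ equals $D_R$, the $2^R$-dimensional space of functions that are constant on each dyadic interval $[k2^{-R},(k+1)2^{-R})$, $k=0,\dots,2^R-1$. On the sampling side, unwinding the dyadic-digit definition of $\Wal(n,\cdot)$ shows that for every $n$ in the range $0\le n<2^R$ the value $\Wal(n,x)$ depends only on which such dyadic interval contains $x$, so $\Wal(n,\cdot)\in D_R$; since there are $2^R$ of these Walsh functions and they are orthonormal, they form a basis of $D_R$, i.e. $\mathcal{S}_{2^R}=D_R=V_R^b$. (Equivalently, the matrix $U^{[N,N]}$ of \eqref{Eq:DefU}, the Walsh--Haar cross-Gram matrix, is a scalar multiple of a Hadamard matrix and hence orthogonal --- the exact block-diagonality observed numerically in Figure~\ref{fig:RecHaarWalsh}.) The general $d$ then follows at once from the tensor structure: $\R=V_R^{b,d}=(V_R^b)^{\otimes d}=D_R^{\otimes d}$, whereas for $M=N=(2^R)^d$ the space $\mathcal{S}_N$ is spanned by the tensor Walsh functions $\Wal(n_1,\cdot)\otimes\cdots\otimes\Wal(n_d,\cdot)$ with all $n_k<2^R$, so $\mathcal{S}_N=D_R^{\otimes d}$ as well, and hence $\R=\mathcal{S}_N$.

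The step I expect to require the most care is the one-dimensional claim that $\Wal(n,\cdot)$ is constant on the dyadic intervals of length $2^{-R}$ whenever $n<2^R$: this requires reading the (sequency-ordered) Walsh definition correctly and pinning down precisely which $2^R$ Walsh functions constitute $\mathcal{S}_{2^R}$ under the present indexing, in particular the role of the constant $\Wal(0,\cdot)$. One should also verify carefully that the boundary-corrected Haar system genuinely collapses to the plain scaling space $V_R$, so that no extra boundary functions intrude into the $d$-fold tensor product. Everything else in the argument is one of the two dimension counts above.
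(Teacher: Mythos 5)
Your argument is correct, and at its core it rests on the same structural fact as the paper's proof --- that for $M=N=2^{dR}$ the sampling and reconstruction spaces coincide --- but you reach it from the opposite direction and you prove slightly more. The paper expands an arbitrary unit vector $r_0\in\RM$ in the level-$R$ scaling functions and invokes Lemma~\ref{Lem:HaarScalDec} (namely $\langle\phi_{R,j},\Wal(n,\cdot)\rangle=0$ for $n\ge 2^R$) to conclude $P_{\SM}^\perp r_0=0$, i.e.\ the containment $\RM\subseteq\SM$, whence $\mu(\RM,\SM)=1$; you instead prove the reverse containment $\mathcal{S}_N\subseteq\R$ by observing that the first $2^R$ Walsh functions are constant on dyadic intervals of length $2^{-R}$ and hence lie in the piecewise-constant space spanned by the level-$R$ Haar scaling functions, then upgrade to equality by a dimension count. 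The two containments are equivalent here, so this is mostly a matter of packaging, though your version needs no integral computation and sidesteps the paper's (incorrectly stated, but harmless since the right-hand term vanishes) identity $\|P_{\SM}r_0\|=1-\|P_{\SM}^\perp r_0\|$. More substantively, you supply the lower bound $\Theta(N,\theta)\ge N$ via the kernel/dimension argument; the paper's proof only establishes $\mu(\RM,\SM)=1$, i.e.\ $\Theta(N,\theta)\le N$, and leaves the (easy but necessary) fact that no $M<N$ suffices unstated. The one step you flag as delicate --- which $2^R$ Walsh functions constitute $\mathcal{S}_{2^R}$ and the role of $\Wal(0,\cdot)$ --- is indeed muddled in the paper's own conventions (the sampling space is written with $n_k=1,\dots,m$, and the tail sum in the paper's proof starts at $n=2^R+1$ rather than $2^R$), but under the standard sequency ordering $n=0,\dots,2^R-1$ your identification $\mathcal{S}_{2^R}=V_R^b$ is exactly the block-diagonality visible in Figure~\ref{fig:RecHaarWalsh} and is correct; likewise your observation that the boundary correction is vacuous for Haar is right.
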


For the proof we first study the behaviour of Haar wavelets in one dimension under the Walsh transform. This way we also get a theoretical argument for the block structure that can be seen in the numerical implementation.

\begin{lemma}\label{Th:HaarWavDec}
	Let $\psi = \mathcal{X}_{[0,1/2]} - \mathcal{X}_{(1/2,1]}$ be the Haar wavelet. Then, we have that 
	\begin{equation}
		|\langle \psi_{R,j}, \Wal(n,\cdot) \rangle| = \begin{cases}
		2^{-R/2} &\quad 2^R \leq n < 2^{R+1}, 0 \leq j \leq 2^R -1 \\
		 0 &\quad  \text{otherwise.}
		\end{cases}
	\end{equation}
\end{lemma}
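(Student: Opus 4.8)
The plan is to localise the inner product to the support of $\psi_{R,j}$, rescale that dyadic interval to $[0,1]$, and then use the multiplicative and scaling structure of the Walsh system to strip off the fine oscillation, reducing everything to a single Walsh coefficient of the Haar wavelet itself.

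First I would dispose of the trivial cases: if $j<0$ or $j\ge 2^{R}$ then $\psi_{R,j}$ vanishes on $[0,1]$ and the inner product is $0$, so from now on $0\le j\le 2^{R}-1$ and the support of $\psi_{R,j}$ is the dyadic interval $I_{R,j}=[j2^{-R},(j+1)2^{-R}]$. The change of variables $x=2^{-R}(y+j)$ then gives
\begin{equation}
\langle \psi_{R,j},\Wal(n,\cdot)\rangle = 2^{-R/2}\int_{0}^{1}\psi(y)\,\Wal\!\big(n,2^{-R}(y+j)\big)\,dy.
\end{equation}
I would also record two facts about the Walsh system that I intend to use: the Haar wavelet is the first Walsh function, $\psi=\Wal(1,\cdot)$ on $[0,1]$ (it has exactly one zero crossing), and $\{\Wal(n,\cdot)\}_{n\ge 0}$ is an orthonormal basis of $L^{2}([0,1])$.

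Next I would split $n=a+2^{R}b$ with $0\le a<2^{R}$ and $b=\lfloor n/2^{R}\rfloor$. Because $a$ and $2^{R}b$ have disjoint binary supports, $n=a\oplus 2^{R}b$, so the multiplicativity of $\Wal$ in the first argument (which follows from the defining formula exactly as the stated multiplicativity in the second argument does) together with the scaling property $\Wal(2^{R}b,z)=\Wal(b,2^{R}z)$ yields
\begin{equation}
\Wal\!\big(n,2^{-R}(y+j)\big)=\Wal\!\big(a,2^{-R}(y+j)\big)\,\Wal(b,y+j),\qquad y\in[0,1).
\end{equation}
On $[0,1)$ one has $\Wal(b,y+j)=\Wal(b,y)$ since $\Wal(b,\cdot)$ only sees the fractional part of its argument, and $\Wal(a,2^{-R}(y+j))$ is constant in $y$ — its argument runs through the single generation-$R$ dyadic interval $I_{R,j}$, and every $\Wal(a,\cdot)$ with $a<2^{R}$ is constant on such intervals; write this value as $\varepsilon_{a,j}\in\{-1,1\}$. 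Plugging back in,
\begin{equation}
\langle \psi_{R,j},\Wal(n,\cdot)\rangle = 2^{-R/2}\,\varepsilon_{a,j}\,\langle \Wal(1,\cdot),\Wal(b,\cdot)\rangle,
\end{equation}
which by orthonormality is $\varepsilon_{a,j}2^{-R/2}$ when $b=1$ and $0$ otherwise. Since $b=1$ is exactly $2^{R}\le n<2^{R+1}$, and $j$ has already been restricted to $0\le j\le 2^{R}-1$, this is the assertion of the lemma.

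The two places where I expect the real work to sit are the structural facts about the Walsh system invoked above: that $\Wal(a,\cdot)$ is constant on the dyadic intervals of generation $R$ whenever $a<2^{R}$, and that $\Wal$ is multiplicative in its first argument with respect to dyadic addition — these are what let the coarse and the single-scale parts of $\Wal(n,\cdot)$ separate cleanly, and they are also the conceptual reason behind the perfectly block-diagonal reconstruction matrix in Figure~\ref{fig:RecHaarWalsh}. Both reduce to short computations with the explicit formula for $\Wal$ and the scaling/multiplicative identities already in hand, but they are the step I would write out most carefully; everything else is bookkeeping with the change of variables and orthonormality.
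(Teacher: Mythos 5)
Your argument is correct, and it is genuinely different from the one in the paper. The paper proves the lemma by a direct three-case analysis on $n$ (namely $n<2^R$, $2^R\le n<2^{R+1}$, $n\ge 2^{R+1}$), in each case comparing the integrals of $\Wal(n,\cdot)$ over the two half-supports $\Delta_{2j}^{R+1}$ and $\Delta_{2j+1}^{R+1}$ and invoking the sign-pattern fact from \cite{Vegard} that for $2^p\le n<2^{p+1}$ the function $\Wal(n,\cdot)$ is $+1$ on one of $\Delta_{2k}^{p+1},\Delta_{2k+1}^{p+1}$ and $-1$ on the other. You instead rescale to the support, split $n=a\oplus 2^Rb$, and use multiplicativity in the sequency variable together with the scaling identity to factor $\Wal(n,\cdot)$ into a coarse part (constant on $\Delta_j^R$) and a fine part $\Wal(b,\cdot)$, so that all three of the paper's cases collapse into the single orthogonality relation $\langle \Wal(1,\cdot),\Wal(b,\cdot)\rangle=\delta_{1b}$; the same computation with $\Wal(0,\cdot)=\phi$ in place of $\Wal(1,\cdot)=\psi$ immediately gives Lemma \ref{Lem:HaarScalDec}, and the unimodular constants $\varepsilon_{a,j}$ make the Hadamard-type block structure of Figure \ref{fig:RecHaarWalsh} explicit. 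The one step you rightly flag as needing care is the multiplicativity $\Wal(m\oplus m',x)=\Wal(m,x)\Wal(m',x)$ in the \emph{first} argument: for the sequency-ordered functions of the paper's definition this holds because the exponent $\sum_i(n_i+n_{i+1})x_{-i-1}$ is GF(2)-linear in the digits of $n$ (the Gray-code map commutes with dyadic addition), and once that short verification is written out your proof is complete; the trade-off relative to the paper is that you replace the cited sign-pattern lemma by this group-theoretic identity plus orthonormality of the Walsh system.
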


\begin{proof}
	For the scalarproduct we have
	\begin{align}
		\langle \psi_{R,j}, \Wal(n,\cdot) \rangle 
		& = \int_0^1 2^{R/2}\left(\mathcal{X}_{[0,1/2]}(2^R x - j) - \mathcal{X}_{(1/2,1]}(2^R x - j) \right)\Wal(n,x) dx \\
		& = 2^{R/2} \left( \int_{\Delta_{2j}^{R+1}} \Wal(n,x) dx  - \int_{\Delta_{2j+1}^{R+1}} \Wal(n,x) dx \right),
	\end{align}
	where $\Delta_k^p = [2^{-p}k, 2^{-p}(k+1))$. We know from \cite{Vegard} that the function $\Wal(n,x)$ for $2^p \leq n <2^{p+1}$ takes the value $+1$ on the interval $\Delta_{2k}^{p+1}$ or $\Delta_{2k+1}^{p+1}$ and $-1$ on the other one for $k =0,\ldots,2^{p}-1$. 
	Now, we consider three different cases. \\[10pt]
	{\bf Case 1: $n < 2^R$}.
	There exist $r < R$ such that $2^{r} \leq n <2^{r+1}$. Then, the function $\Wal(n,x)$ is constant on the interval $\Delta^{r}_k$ for any $k = 0, \ldots, 2^{r}-1$. Note that for $j = 0, \ldots, 2^R -1$ we have that the rounding error is bounded as follows
	\begin{equation}
		\lfloor 2^{r-R} j \rfloor \geq 2^{r-R}j - ( 1 - 2^{r-R}).
	\end{equation}
	Then, we have the interval inclusion 
	\begin{align}
			\Delta^{R+1}_{2j} & = [2^{-R-1}(2j), 2^{-R-1}(2j+1)) \\ 
			&= [2^{-r}2^{-R-1+r}2j, 2^{-r}2^{-R-1+r}(2j+1)) \\
			& \subset [2^{-r} \lfloor 2^{r-R}j \rfloor, 2^{-r}(\lfloor 2^{r-R}j \rfloor+1 )) =  \Delta^{r}_{\lfloor 2^{r-R}j \rfloor}.
	\end{align}
	Moreover, 
	\begin{equation}
		\Delta^{R+1}_{2j+1} \subset \Delta^{r}_{\lfloor 2^{r-R}j \rfloor}.
	\end{equation}
	Hence, $\Wal(n,x)$ takes the same value on $\Delta^{R+1}_{2j}$ and $\Delta^{R+1}_{2j+1}$. Therefore, the two integrals are equal and the scalarproduct vanishes. \\[10pt]
	{\bf Case 2:} $2^R \leq n < 2^{R+1}$.
	We have for $j = 0, \ldots, 2^R -1$ that $\Wal(n,x)$ is equal to $+1$ on $\Delta_{2j}^{R+1}$ or $\Delta_{2j+1}^{R+1}$ and $-1$ on the other. Therefore, we have either
	\begin{align}
		\langle \psi_{R,j}, \Wal(n,x) \rangle & = 2^{R/2} \left( \int_{\Delta_{2j}^{R+1}} \Wal(n,x) dx  - \int_{\Delta_{2j+1}^{R+1}} \Wal(n,x) dx \right) \\
		& = 2^{R/2} \left( \int_{\Delta_{2j}^{R+1}} 1 dx - \int_{\Delta_{2j+1}^{R+1}} -1 dx  \right) = 2^{R/2}  (2^{-R}) = 2^{-R/2}.
	\end{align}
	Or in the other case analogously
	\begin{align}
	\langle \psi_{R,j}, \Wal(n,x) \rangle &  = 2^{R/2} \left( \int_{\Delta_{2j}^{R+1}} -1 dx - \int_{\Delta_{2j+1}^{R+1}} 1 dx  \right) = 2^{R/2}  (-2^{-R}) = -2^{-R/2}.
	\end{align}
	Now, we are left with the last case.\\[10pt]
	{\bf Case 3:} $n \geq 2^{R+1}$.
	There exists an integer $r \geq R+1$ such that $2^r \leq n < 2^{r+1}$. This is similar to the first case. Moreover, we have for $j = 0, \ldots, 2^R -1$ that
	\begin{align}
		\Delta_{2j}^{R+1} = \bigcup_{l=2^{r-R}j}^{2^{r-R}j + 2^{r-R-1} -1 }\Delta_{2l}^{r+1} \cup \Delta_{2l+1}^{r+1}.
	\end{align}
	With the fact that $\Wal(n,x)$ takes the value $+1$ on one of the invervals $\Delta^r_{2l}$ and $\Delta^r_{2l+1}$ and $-1$ on the other, we have that the function $\Wal(n,x)$ takes the values $+1$ and $-1$ on half of the interval of $\Delta_{2j}^{R+1}$. Therefore, the integral vanishes. The same holds true for $\Delta_{2j+1}^{R+1}$ such that we get the desired result.
	
\end{proof}

Before we prove Theorem \ref{Th:SSRHaar}, we analyse the reconstruction matrix for the Haar wavelet - Walsh case in two dimensions. The number of functions which span the wavelet space in $d$ dimensions grows exponentially with $d$. Therefore, we restrict ourselves to two dimensions to underline the main idea. In Figure \ref{fig:RecMatrix2d} we can see that the reconstruction matrix in two dimensions has an additional structure in each level. Similar to the one-dimensional case we have perfect block structure for the Haar case and nearly block structure for the higher order wavelets. For the analysis of this phenomena in the Haar wavelet - Walsh case we examine the definition and order of the two dimensional Haar wavelet. Note that this is necessary for the analysis of the reconstruction matrix, instead of the SSR. In two dimensions the wavelets are constructed by the tensor product. Due to the multi-resolution analysis we have that $V_r^2 = (V_{r-1}\oplus W_{r-1})^2 = V_{r-1}^2 \oplus V_{r-1} \otimes W_{r-1} \oplus W_{r-1} \otimes V_{r-1} \oplus W_{r-1}^2$. Hence, the wavelets can be in three different spaces, $V_{r-1} \otimes W_{r-1}$, $W_{r-1} \otimes V_{r-1}$ and $W_{r-1}^2$. In the first space wavelets are constructed by the tensor product of the one-dimensional scaling function and the one-dimensional wavelet. For the second one the one-dimensional wavelet and the one-dimensional scaling function are combined by the tensor product. And finally, for the third space we take the tensor products of two one-dimensional wavelets. This results for $R \in \mathbb{N}, 0 \leq j_1, j_2 \leq 2^R -1$ in
\begin{equation}\label{Eq:HaarWavelet2D}
	\psi_{R,j_1,j_2,l}(x_1, x_2) = \begin{cases}
	\phi_{R,j_1}(x_1) \psi_{R,j_2}(x_2) \quad & l =1 \\
	\psi_{R,j_1}(x_1) \phi_{R,j_2}(x_2) \quad & l =2 \\
	\psi_{R,j_1}(x_1) \psi_{R,j_2} (x_2)\quad & l =3.
	\end{cases}
\end{equation}
The scaling function is simply the tensor product of two one-dimensional scaling functions:
\begin{equation}
	\phi(x_1,x_2) = \phi(x_1) \phi(x_2).
\end{equation}
For the order of the reconstruction matrix, we first take the first level scaling function $\phi$. Then, we increase by levels, in each level $R$ we let first $j_1$ go from $0, \ldots, 2^{R}-1$ and then $j_2 = 0, \ldots, 2^R-1$. Finally, we let $l =1,\ldots,3$ such that we get for the order of the wavelets: $\phi$, $\psi_{R,0,0,1}, \ldots, \psi_{R,2^R-1,0,1}$,$ \psi_{R,0,1,1}, \ldots$ $ \psi_{R,2^R-1,1,1},$ $\ldots \psi_{R,2^R-1,2^R-1,1}$, $\psi_{R,0,0,2},\ldots, \psi_{R,2^R-1,2^R-1,3}$.

Due to the fact that the higher dimensional wavelets are constructed also by means of the scaling functions, it is necessary to analyse the decay rate for the scaling function as well. Moreover, this is also a main ingredient for the proof of Theorem \ref{Th:SSRHaar} as we represent the union of the wavelet spaces by the scaling space.

\begin{lemma}\label{Lem:HaarScalDec}
	Let $\phi = \mathcal{X}_{[0,1]}$ be the Haar scaling function. Then, we have that the Walsh transform obeys the following block and decay structure
	\begin{equation}\label{Eq:scalingDecay}
		|\langle \phi_{R,j}, \Wal(n,\cdot) \rangle| = \begin{cases}
			2^{-R/2} \quad & n < 2^R, 0 \leq j \leq 2^R -1 \\
			0 \quad &\text{otherwise.}
		\end{cases} 
	\end{equation}
\end{lemma}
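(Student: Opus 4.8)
The plan is to compute the Walsh coefficients of the scaled and translated Haar scaling function directly, following the three--case scheme used in the proof of Lemma~\ref{Th:HaarWavDec}, but exploiting that $\phi$ is constant on its support rather than sign--changing. Keeping the notation $\Delta_k^p = [2^{-p}k, 2^{-p}(k+1))$ from that proof, one has $\phi_{R,j}(x) = 2^{R/2}\mathcal{X}_{[0,1]}(2^R x - j) = 2^{R/2}\mathcal{X}_{\Delta_j^R}(x)$, hence
\begin{equation}
	\langle \phi_{R,j}, \Wal(n,\cdot) \rangle = 2^{R/2} \int_{\Delta_j^R \cap [0,1]} \Wal(n,x)\, dx .
\end{equation}
If $j \notin \{0,\ldots,2^R-1\}$ then $\Delta_j^R$ is disjoint from $[0,1]$ and the coefficient is $0$, which is part of the ``otherwise'' branch; for the remaining indices $0 \le j \le 2^R-1$ I would split on the size of $n$.

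If $n < 2^R$ — say $2^r \le n < 2^{r+1}$ with necessarily $r+1 \le R$, the case $n=0$ being trivial — then by the fact quoted from \cite{Vegard} the function $\Wal(n,\cdot)$ is constant and equal to $\pm 1$ on every level-$(r+1)$ dyadic interval, and since $r+1 \le R$ the interval $\Delta_j^R$ lies inside a single such interval. Thus $\Wal(n,\cdot)$ is constant on $\Delta_j^R$, the integral equals $\pm 2^{-R}$, and $|\langle \phi_{R,j}, \Wal(n,\cdot)\rangle| = 2^{R/2}\,2^{-R} = 2^{-R/2}$. If instead $n \ge 2^R$ — say $2^r \le n < 2^{r+1}$ with $r \ge R$ — then, exactly as in the last case of the proof of Lemma~\ref{Th:HaarWavDec}, one tiles $\Delta_j^R$ by the $2^{r+1-R}$ intervals $\Delta_l^{r+1}$, $l = 2^{r+1-R}j, \ldots, 2^{r+1-R}j + 2^{r+1-R}-1$; since $r+1-R \ge 1$ their number is even and $2^{r+1-R}j$ is even, so they group into consecutive pairs $\{\Delta_{2k}^{r+1},\Delta_{2k+1}^{r+1}\}$, on each of which $\Wal(n,\cdot)$ takes the value $+1$ on one half and $-1$ on the other, contributing $0$. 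Summing over the pairs gives $\langle \phi_{R,j}, \Wal(n,\cdot)\rangle = 0$, which finishes the ``otherwise'' branch and the proof.

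I do not anticipate a genuine obstacle, since this is the direct analogue of the already--established Lemma~\ref{Th:HaarWavDec}; the only delicate point is the bookkeeping in the case $n \ge 2^R$, namely checking that the tiling of $\Delta_j^R$ starts at an \emph{even} level-$(r+1)$ index so that its sub-intervals group into the $\pm1$ blocks supplied by \cite{Vegard} rather than straddling them. As an aside, the vanishing of the coefficients for $n \ge 2^R$ could alternatively be deduced from the multi-resolution identity $\bigcup_{r<R} W_r^b = V_R^b$ together with Lemma~\ref{Th:HaarWavDec} (each $\psi_{r,j}$ with $r<R$ has Walsh support in $2^r \le n < 2^{r+1}$, and $\langle \phi, \Wal(n,\cdot)\rangle = \delta_{n,0}$), but that route does not reveal the exact magnitude $2^{-R/2}$ for $n < 2^R$, so the direct computation above is preferable.
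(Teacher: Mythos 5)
Your proposal is correct and follows essentially the same route as the paper: reduce the coefficient to $2^{R/2}\int_{\Delta_j^R}\Wal(n,x)\,dx$, use constancy of $\Wal(n,\cdot)$ on dyadic intervals of the appropriate level for $n<2^R$, and the pairwise $\pm 1$ cancellation over a tiling of $\Delta_j^R$ for $n\ge 2^R$, exactly as in Case 3 of Lemma~\ref{Th:HaarWavDec}. Your explicit check that the tiling starts at an even level-$(r+1)$ index is a worthwhile detail the paper leaves implicit.
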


\begin{proof}
	The scalarproduct can be expressed as an integral over the interval $\Delta_{j}^R$.
	\begin{align}
		\langle \phi_{R,j}, \Wal(n,\cdot) \rangle & = \int_0^1 2^{R/2} \mathcal{X}_{[0,1]}(2^R x - j) \Wal(n,x) dx \\
		& = 2^{R/2} \int_{\Delta_j^R} \Wal(n,x) dx.
	\end{align} 
	We look at the two different cases \\
	\textbf{Case 1:} $n <2^{R} \quad$ Remember from Lemma \ref{Th:HaarWavDec} that $\Wal(n,x)$ is constant to $+1$ or $-1$ on the interval $\Delta_j^R$ for $j = 0, \ldots, 2^R -1$. Hence, we get that
	\begin{equation}
|\langle \phi_{R,j}, \Wal(n,\cdot) \rangle|  = | 2^{R/2} \int_{\Delta_j^R} \Wal(n,x) dx| = 2^{-R/2}.
	\end{equation}
	\textbf{Case 2:} $n \geq 2^R \quad$ This follows as in Case 3 of Lemma \ref{Th:HaarWavDec}. With the difference that we are looking at the integral over the interval $\Delta_j^R$ instead of the two integrals $\Delta_{2j}^{R+1}$ and $\Delta_{2j+1}^{R+1}$. Nevertheless, they vanish for the same reason.
\end{proof}

With this in hand we can now state the structure of the reconstruction matrix in two dimensions.

\begin{corollary}
	Let $\psi_{R,j_1,j_2,l}$ be the Haar wavelet defined as in \eqref{Eq:HaarWavelet2D}. Then, the Walsh transform has the following property for $0 \leq j_1, j_2 \leq 2^R -1$
	\begin{equation}
	|\langle \psi_{R,j_1,j_2,1} , \Wal(n_1,n_2,\cdot,\cdot) \rangle | = \begin{cases}
	 2^{-R} \quad & n_1 \leq 2^R, 2^R \leq n_2 < 2^{R+1} \\
	 0 & \text{otherwise} ,
	\end{cases}
	\end{equation}	
	\begin{equation}
		|\langle \psi_{R,j_1,j_2,2} , \Wal(n_1,n_2,\cdot,\cdot) \rangle | = \begin{cases}
		 2^{-R} \quad & 2^{R} \leq n_1 < 2^{R+1}, n_2 \leq 2^{R} \\
		 0 & \text{otherwise} 
		\end{cases}
	\end{equation}
	and for the third version
	\begin{equation}
		|\langle \psi_{R,j_1,j_2,3} , \Wal(n_1,n_2,\cdot,\cdot) \rangle | = \begin{cases}
			 2^{-R} \quad & 2^{R} \leq n_1 < 2^{R+1}, 2^R \leq n < 2^{R+1} \\
			 0 & \text{otherwise.} 
		\end{cases}
	\end{equation}
\end{corollary}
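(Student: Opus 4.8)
The plan is to exploit the tensor-product structure of both the two-dimensional Walsh functions and the two-dimensional Haar wavelets, which reduces the corollary to the one-dimensional computations already performed in Lemma \ref{Th:HaarWavDec} and Lemma \ref{Lem:HaarScalDec}. Since $\Wal(n_1,n_2,x_1,x_2) = \Wal(n_1,x_1)\Wal(n_2,x_2)$ by definition, and since each $\psi_{R,j_1,j_2,l}$ is, by \eqref{Eq:HaarWavelet2D}, a product of a function of $x_1$ alone and a function of $x_2$ alone, Fubini's theorem factors the two-dimensional inner product as
\begin{equation}
\langle \psi_{R,j_1,j_2,l}, \Wal(n_1,n_2,\cdot,\cdot)\rangle = \langle h^{1}_{R,j_1}, \Wal(n_1,\cdot)\rangle \cdot \langle h^{2}_{R,j_2}, \Wal(n_2,\cdot)\rangle,
\end{equation}
where $(h^{1},h^{2})$ is $(\phi,\psi)$, $(\psi,\phi)$ or $(\psi,\psi)$ for $l=1,2,3$ respectively. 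Taking moduli, the two-dimensional quantity is simply the product of the two one-dimensional quantities.

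First I would treat $l=1$. The $x_1$-factor $|\langle \phi_{R,j_1}, \Wal(n_1,\cdot)\rangle|$ equals $2^{-R/2}$ exactly when $n_1 < 2^R$ and $0 \le j_1 \le 2^R-1$, and vanishes otherwise, by Lemma \ref{Lem:HaarScalDec}; the $x_2$-factor $|\langle \psi_{R,j_2}, \Wal(n_2,\cdot)\rangle|$ equals $2^{-R/2}$ exactly when $2^R \le n_2 < 2^{R+1}$ and $0 \le j_2 \le 2^R-1$, and vanishes otherwise, by Lemma \ref{Th:HaarWavDec}. Multiplying yields $2^{-R}$ precisely on the stated index set and $0$ elsewhere. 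The case $l=2$ follows by interchanging the two coordinates, and the case $l=3$ follows by applying Lemma \ref{Th:HaarWavDec} to both factors, giving $2^{-R}$ exactly when $2^R \le n_1 < 2^{R+1}$ and $2^R \le n_2 < 2^{R+1}$.

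I do not expect any genuine obstacle here: the claim is an immediate consequence of the separability of the Walsh kernel together with the two one-dimensional lemmas, so the only care needed is the bookkeeping of which of $\phi$, $\psi$ sits in which coordinate for each $l$, together with the observation that no boundary correction enters because the Haar system on $[0,1]$ already consists of genuinely supported translates. I would close by noting that this block-and-decay pattern is exactly what makes the two-dimensional reconstruction matrix perfectly block diagonal (cf.\ Figure \ref{fig:RecMatrix2d}) and, after summing over the levels, is the ingredient that feeds into the proof of Theorem \ref{Th:SSRHaar}.
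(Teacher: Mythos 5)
Your proof is correct and follows exactly the paper's own (one-line) argument: the two-dimensional inner product factors by the tensor-product structure of the Walsh functions and the Haar wavelets, and the two one-dimensional results, Lemma \ref{Th:HaarWavDec} and Lemma \ref{Lem:HaarScalDec}, supply the values of the two factors. Note only that your careful bookkeeping yields the condition $n_1 < 2^R$ for $l=1$ (and $n_2 < 2^R$ for $l=2$), consistent with Lemma \ref{Lem:HaarScalDec}, rather than the $n_1 \leq 2^R$ written in the corollary's statement, which appears to be an off-by-one slip in the statement itself rather than in your reasoning.
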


\begin{proof}
	The proof follows directly from the tensor product structure and Theorem \ref{Th:HaarWavDec} and Lemma \ref{Lem:HaarScalDec}.
\end{proof}

\begin{figure}
	\centering
	\subfloat[Haar-Walsh]{\includegraphics[width=0.31\textwidth]{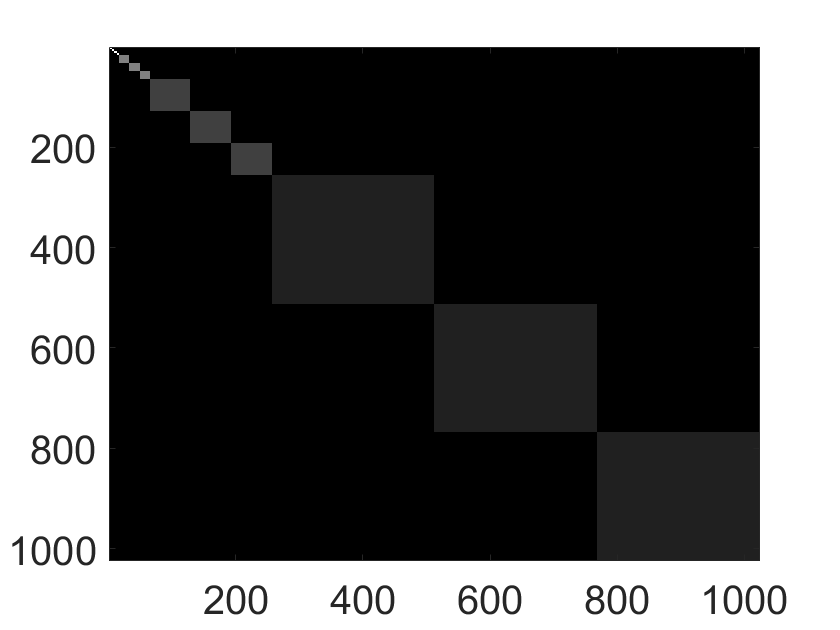}\label{fig:RecHaarWalsh2d}}\quad
	\subfloat[db$2$ - Walsh]{\includegraphics[width=0.31\textwidth]{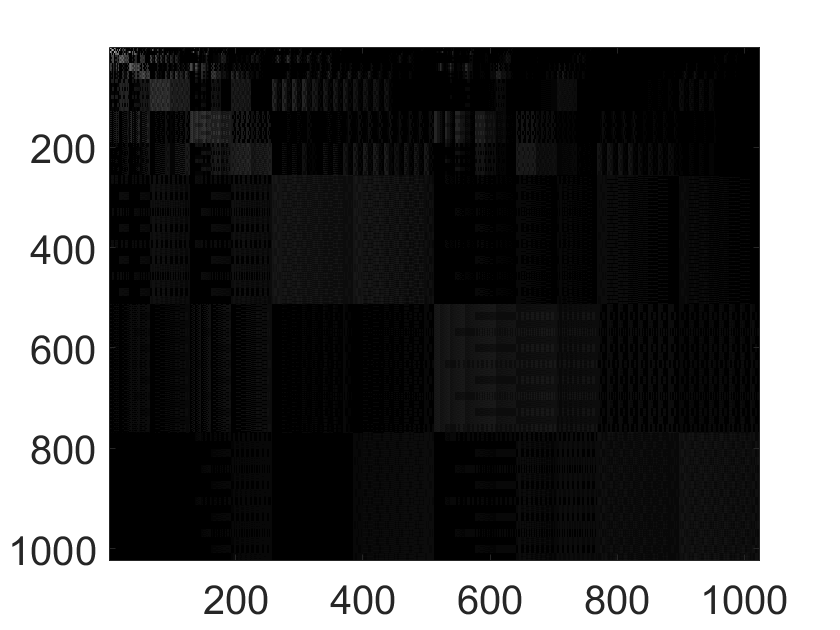}} \quad 
	\subfloat[db$8$ - Walsh]{\includegraphics[width=0.31\textwidth]{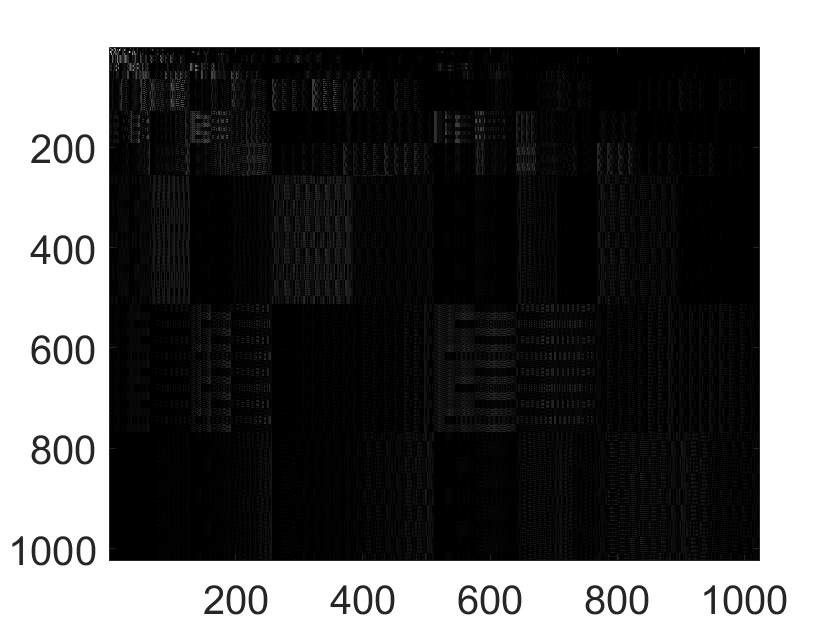}}
	\caption{Reconstruction matrix in two dimensions}\label{fig:RecMatrix2d}
\end{figure}

 After this detailed analysis of the behaviour of the wavelet and the scaling function under the Walsh transform, we are able to proof Theorem \ref{Th:SSRHaar}. 

\begin{proof}[Proof of Theorem \ref{Th:SSRHaar}]
	We want to analyse the subspace angle $\mu(\R,\SM)$ for $N=M$. In detail, we are interested in bounding $\mu(\RM,\SM)$ by  $\theta$ for all $\theta \in (1,\infty)$. Hence, we try to show that $\mu(\mathcal{R}_M,\SM) = 1$ or equally $1/\mu(\mathcal{R}_M,\SM) = 1$  for $M = 2^{dR}$. Due to the fact that the circle $\left\{r \in \RM, ||r||=1 \right\}$ is compact, and the orthogonal projection is continuous there exists $r_0 \in \RM$, $||r_0|| =1$ such that we have
	\begin{equation}
		\frac 1 {\mu(\RM,\SM)} = \inf_{r \in \RM, ||r||=1} ||P_{\SM} r || = ||P_{\SM} r_0 || = 1 - || P_{\SM}^\perp r_0||.
	\end{equation}
	The minimal element $r_0$ can be represented by
	\begin{equation}
	r_0 = \sum_{j=0}^{2^{R}-1} \bigotimes_{i=1}^d \alpha_j \phi_{R,j} \text{ with } \sum_{j=0}^{2^{R}-1} |\alpha_j|^2 =1,
	\end{equation}
	where the multi-index notation is used.
	Then, we have that
	\begin{align}
		P_{\SM}^\perp r_0  
		& = \sum_{n=2^R+1}^{\infty} \langle \sum_{j=0}^{2^{R}-1} \bigotimes_{i=1}^d \alpha_j \phi_{R,j} , \Wal(n,\cdot) \rangle \Wal(n,\cdot)\\
		& = \sum_{n=2^R+1}^{\infty} \sum_{j=0}^{2^{R}-1} \sum_{i=1}^d \alpha_{j_i}  \langle \phi_{R,j_i} , \Wal(n_i, \cdot) \rangle \Wal(n_i, \cdot).
	\end{align}
	With \eqref{Eq:scalingDecay} we get that this sum vanishes. Hence,
	\begin{equation}
		\mu(\RM,\SM) = 1
	\end{equation}
	as desired.
\end{proof}

\subsection{Approximation Rate}

In this chapter we discuss the approximation rate for Walsh functions and wavelets. With approximation theory we get a good insight in the representation properties of bases. For a given orthonormal basis $\left\{r_i\right\}_{i \in \mathbb{N}}$ for $L^2([0,1^d])$ we have that for all $f \in L^2([0,1]^d)$ it holds
$$
f = \sum_{i \in \mathbb{N}} \langle f, r_i \rangle r_i.
$$
Unfortunately, in most applications we cannot store or access $\langle f , r_i \rangle$ for all $i \in \mathbb{N}$ but only for $i =1, \ldots, N$ for some $N \in \mathbb{N}$. Hence, instead of the true function $f$ we can only have an approximation $f_N = \sum_{i=1}^N \langle f, r_i \rangle r_i$. In the field of approximation theory one studies for different representation systems how good this estimate is. Particularly, one is interested in the error
\begin{equation}
\epsilon(N,f) = ||f - f_N||_2^2 = \int|f - f_N|^2 dx = \sum_{i > N} |\langle f, r_i \rangle |^2.
\end{equation}
In general, representation systems are desirable with the property that this error decays very fast in $N$. The reason why is that we get a good representation from only a small amount of information. This results in more efficient compression or less measurement time. 

In \cite{Vegard} it is shown that for all continuous functions in $L^2([0,1]^d)$ the approximation error decays with $\mathcal{O}(N^{-2})$. 
The block artefacts are reflected in the poor approximation rate, and it underlines the need of reconstruction methods, which change the basis to achieve better approximation rates. 

Daubechies wavelets obey this property. This is analysed in detail in \cite{WaveletDef}. Consider the representation with Daubechies wavelets of order $p$, and let the data $f$, which we try to represent, be in the Sobolev space $W^\gamma([0,1]^d)$ for some $\gamma <p$. Then, we get an improved approximation rate of 
\begin{equation}
\epsilon(N,f) = \mathcal{O}(N^{-2\gamma}).
\end{equation}
Hence, for all Daubechies wavelets of order $p \geq 3$ and all functions $f \in W^\gamma([0,1]^d)$ for some $\gamma \geq 2$ we get an improved decay rate with Daubechies wavelets in contrast to the representation with Walsh functions.

From Theorem \ref{Th:SSR} we know that the stable sampling rate is linear for Walsh functions and wavelets and in the following chapter we will even see that the constant is reasonably low, i.e. $S_2 = 2$ for Daubechies $8$ wavelets. This means that with only a constant increase of measurements we get from a decay rate of $\mathcal{O}(N^{-2})$ to an approximation rate of $\mathcal{O}(N^{-2\gamma})$.

\section{Numerical Experiments}\label{Ch:Numerics}

\begin{figure}
	\subfloat[Original Signal]{\includegraphics[width=0.45\textwidth]{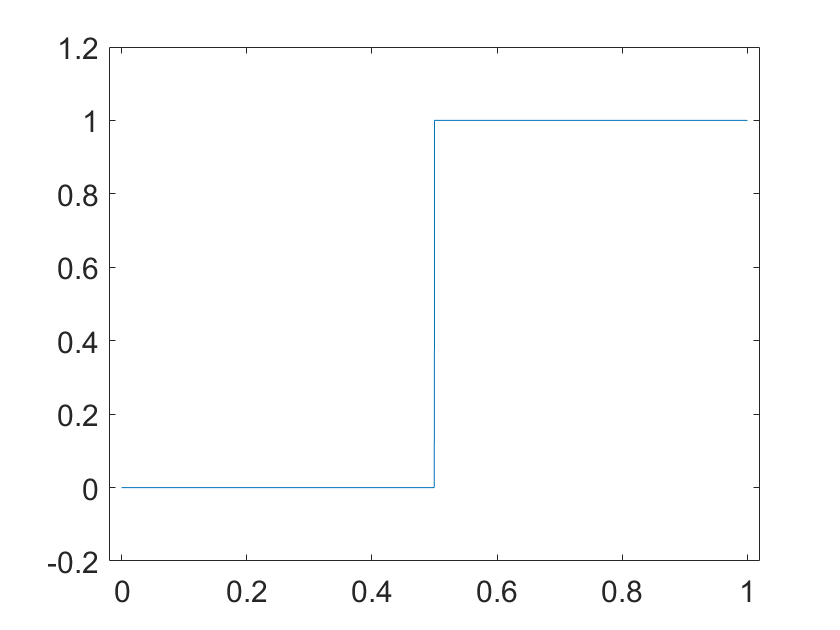}\label{fig:HaarFourOrig}} \quad
	\subfloat[Generalized Sampling with $64$ samples]{\includegraphics[width=0.45\textwidth]{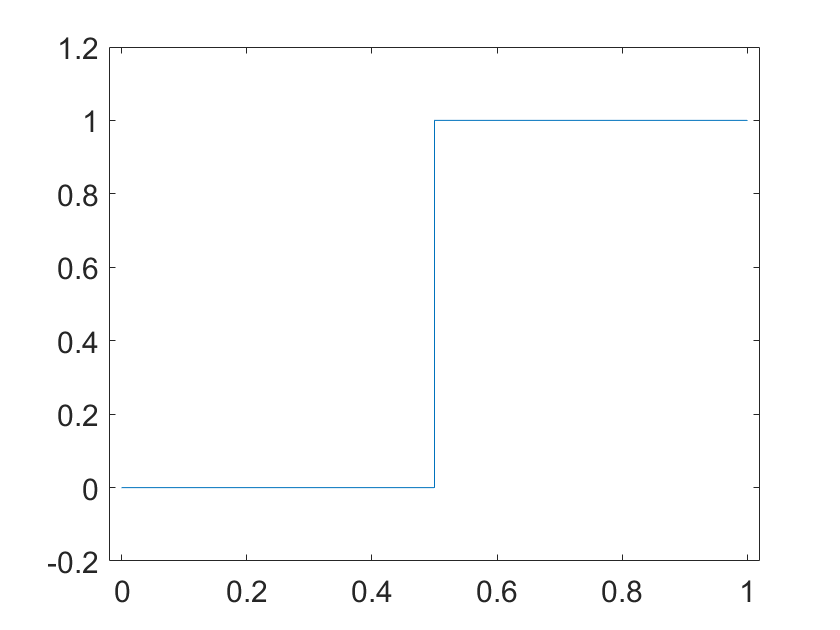}\label{fig:HaarFourGS}} \\
	\subfloat[Truncated Fourier Transform from $64$ samples]{\includegraphics[width=0.45\textwidth]{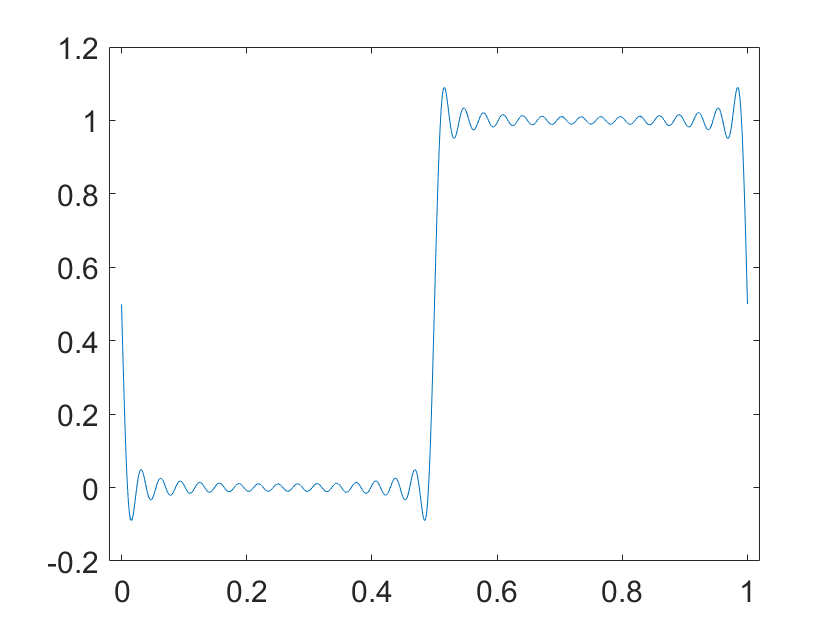}\label{fig:HaarFourTF64}} \quad 
	\subfloat[PBDW-method from $64$ samples]{\includegraphics[width=0.45\textwidth]{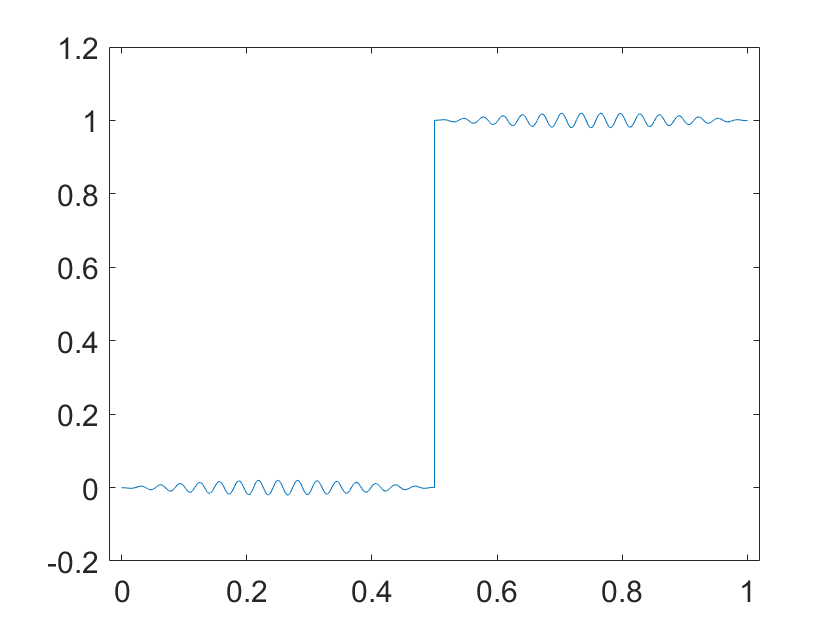}\label{fig:HaarFourDA64}} \\
	\subfloat[Truncated Fourier Transform from $256$ samples]{\includegraphics[width=0.45\textwidth]{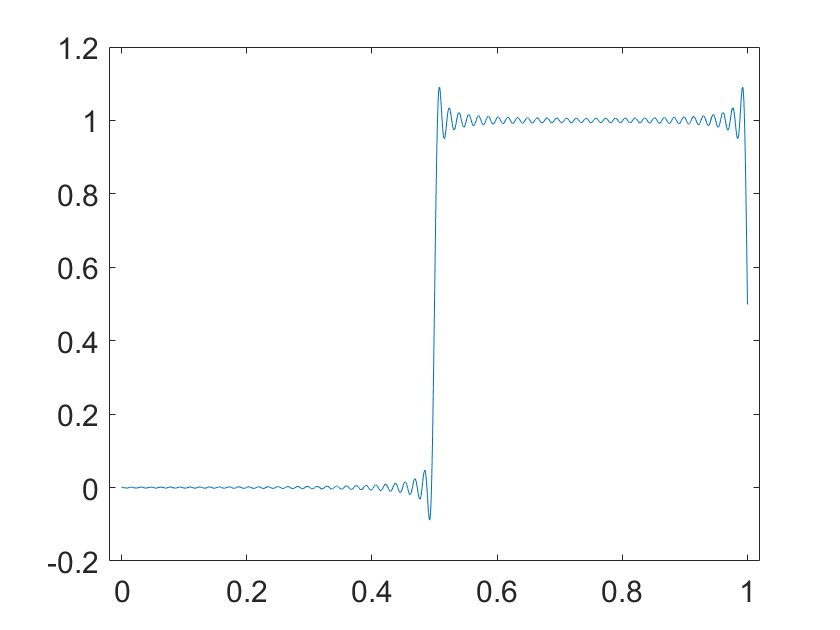}\label{fig:HaarFourTF256}} \quad 
	\subfloat[PBDW-method from $256$ samples]{\includegraphics[width=0.45\textwidth]{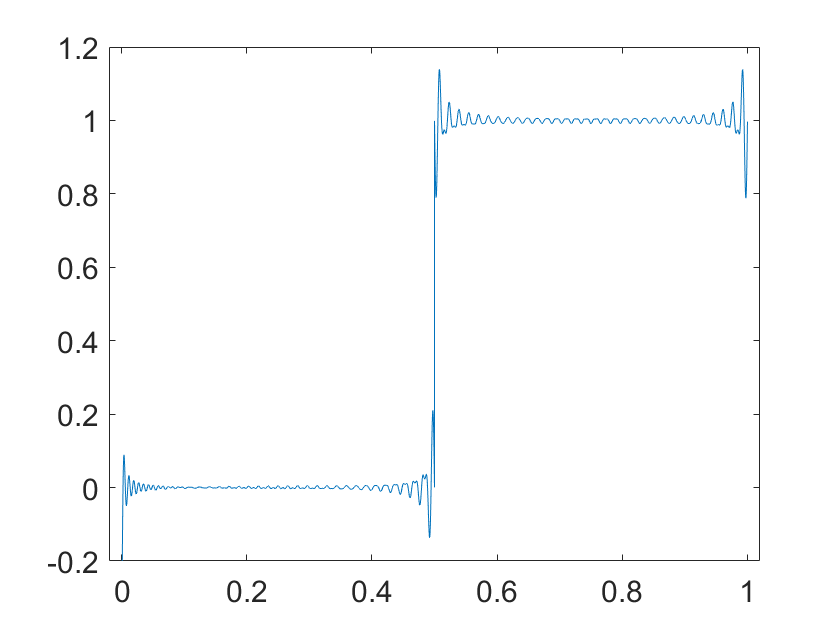}\label{fig:HaarFourDA256}}
	\caption{Reconstruction from Fourier measurements with Haar wavelets and $\operatorname{dim} \R = 32$}
	\label{fig:HaarFour}
\end{figure}

\begin{figure}
	\subfloat[Original Signal]{\includegraphics[width=0.45\textwidth]{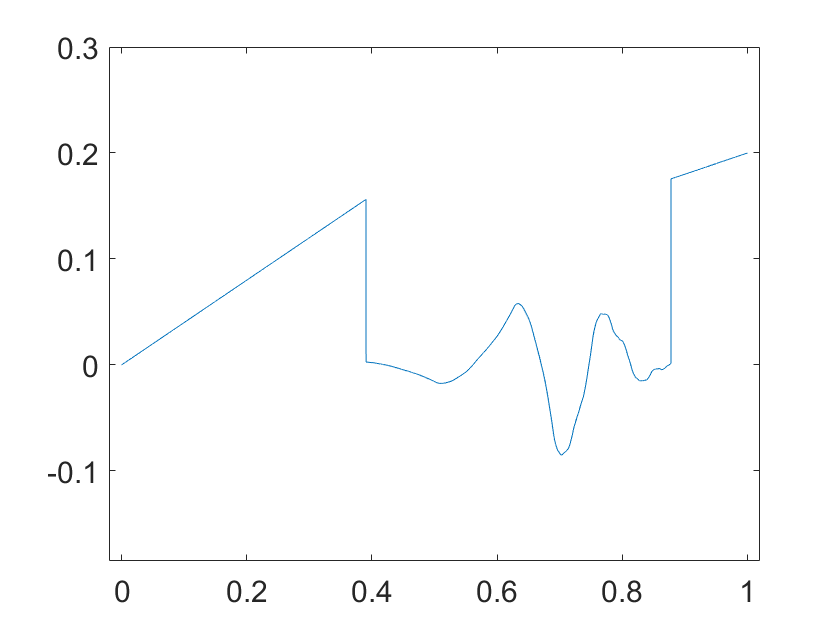}\label{fig:WalHelpOrig}} \quad
	\subfloat[Generalized Sampling with $128$ samples]{\includegraphics[width=0.45\textwidth]{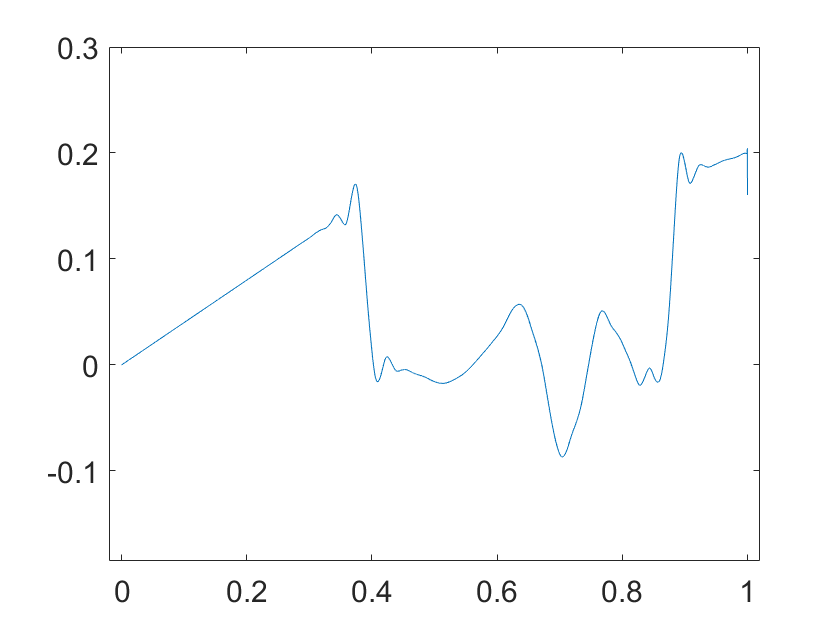}\label{fig:WalHelpGS}} \\
	\subfloat[Truncated Walsh Transform from $128$ samples]{\includegraphics[width=0.45\textwidth]{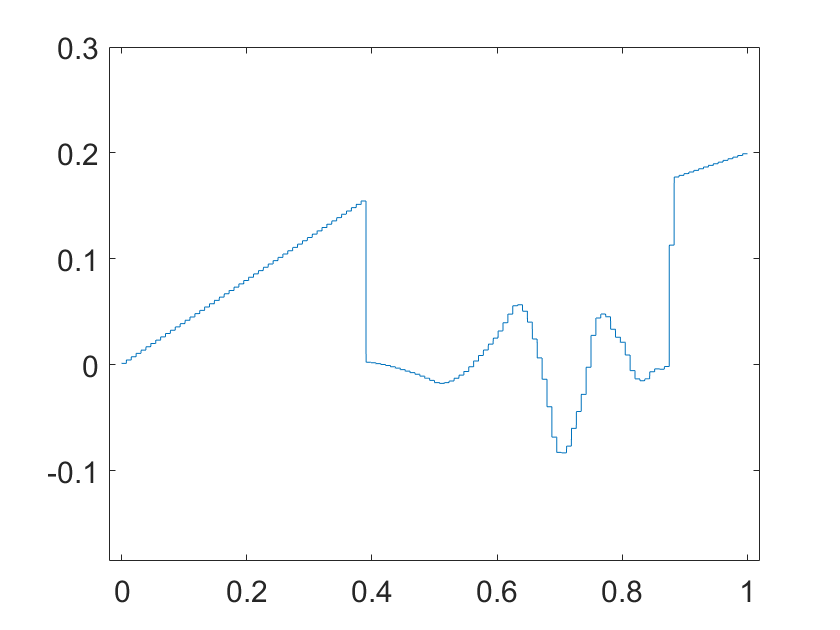}\label{fig:WalHelpTF128}} \quad 
	\subfloat[PBDW-method from $128$ samples]{\includegraphics[width=0.45\textwidth]{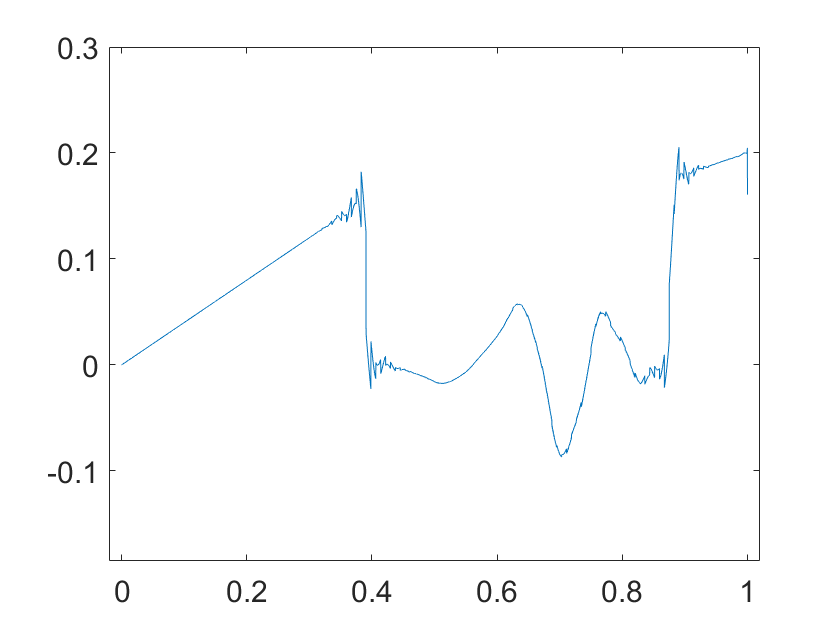}\label{fig:WalHelpDA128}} \\
	\subfloat[Truncated Walsh Transform from $256$ samples]{\includegraphics[width=0.45\textwidth]{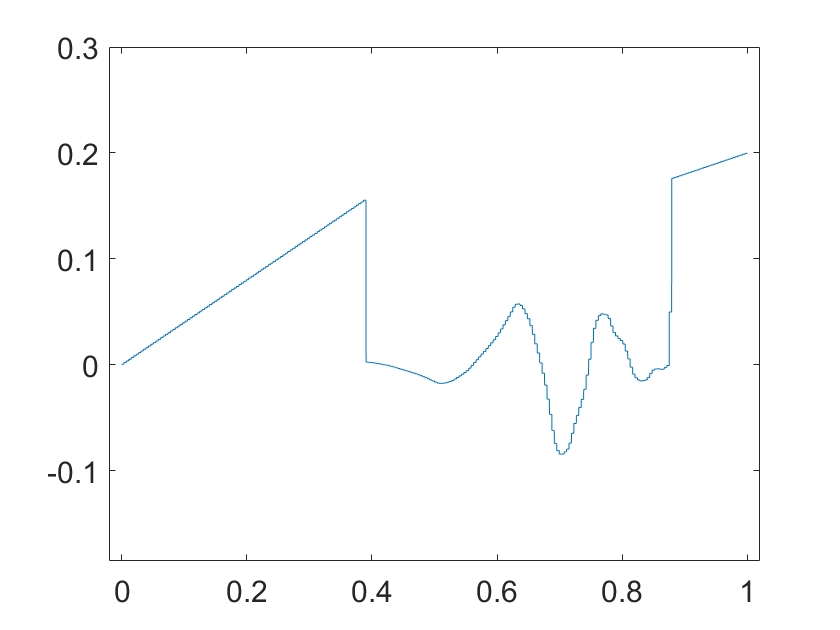}\label{fig:WalHelpTF256}} \quad 
	\subfloat[PBDW-method from $256$ samples]{\includegraphics[width=0.45\textwidth]{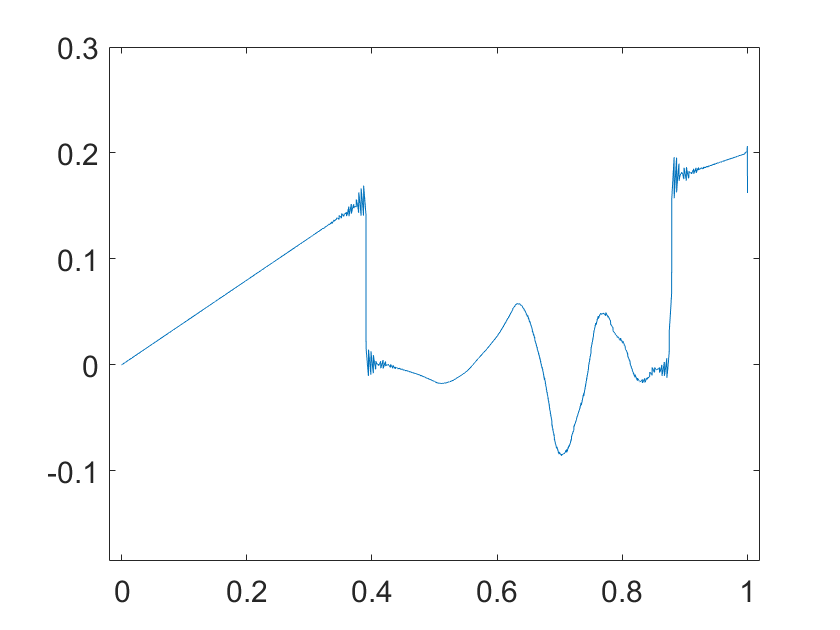}\label{fig:WalHelpDA256}}
	\caption{Reconstruction from binary measurements with Daubechies wavelets $8$ and $\operatorname{dim} \R = 64$}
	\label{fig:WalHelp}
\end{figure}

\begin{figure}
	\subfloat[Original Signal]{\includegraphics[width=0.45\textwidth]{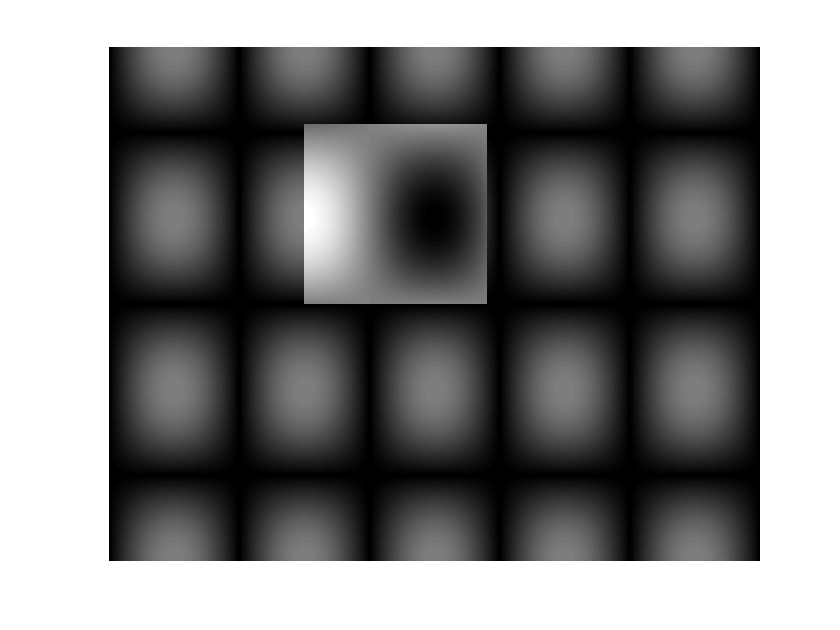}\label{fig:Fourier2DOrig}} \quad
	\subfloat[Truncated Fourier transform with $128^2$ samples]{\includegraphics[width=0.45\textwidth]{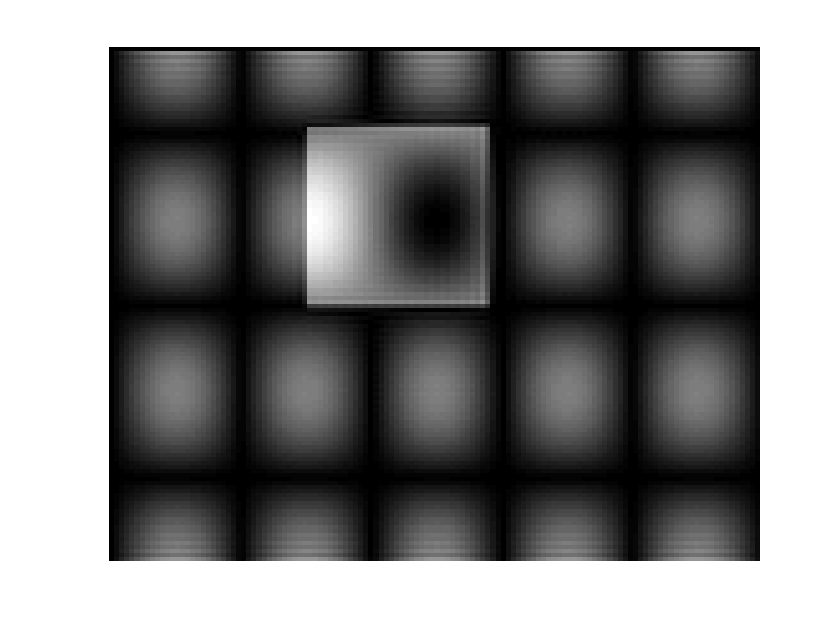}\label{fig:TF2D128}} \\
	\subfloat[Generalized sampling from $128^2$ samples]{\includegraphics[width=0.45\textwidth]{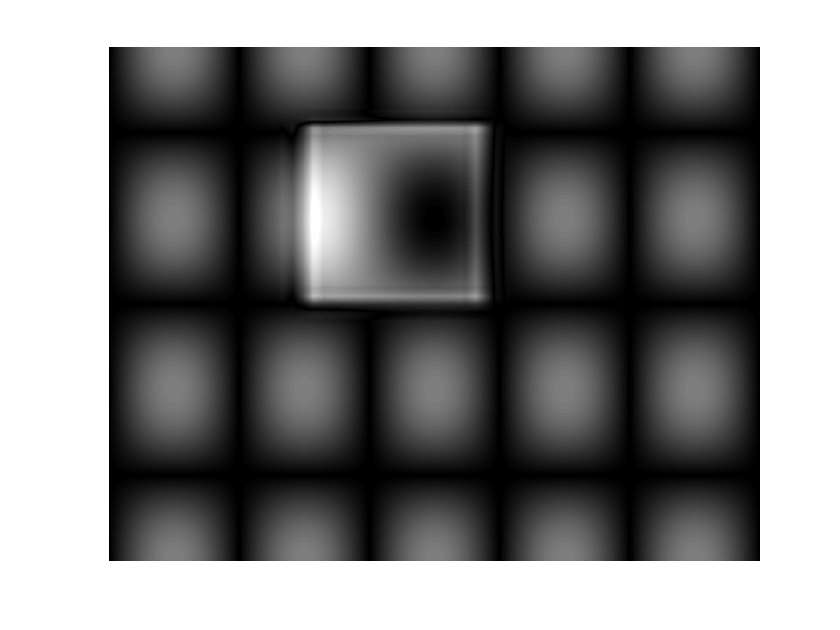}\label{fig:FourierGS2D64128}} \quad 
	\subfloat[PBDW-method from $128^2$ samples]{\includegraphics[width=0.45\textwidth]{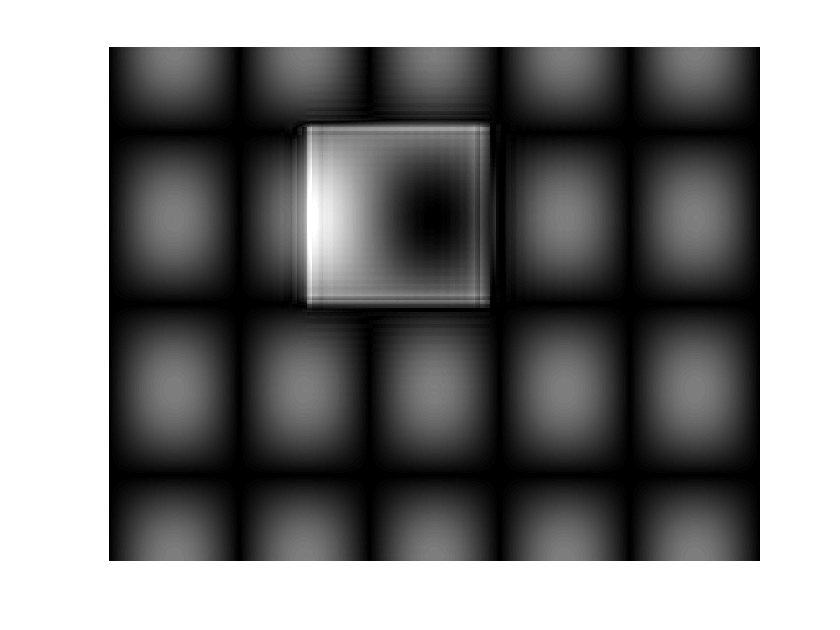}\label{fig:FourierDA2D64128}} \\
	\subfloat[Truncated Fourier Transform from $256^2$ samples]{\includegraphics[width=0.45\textwidth]{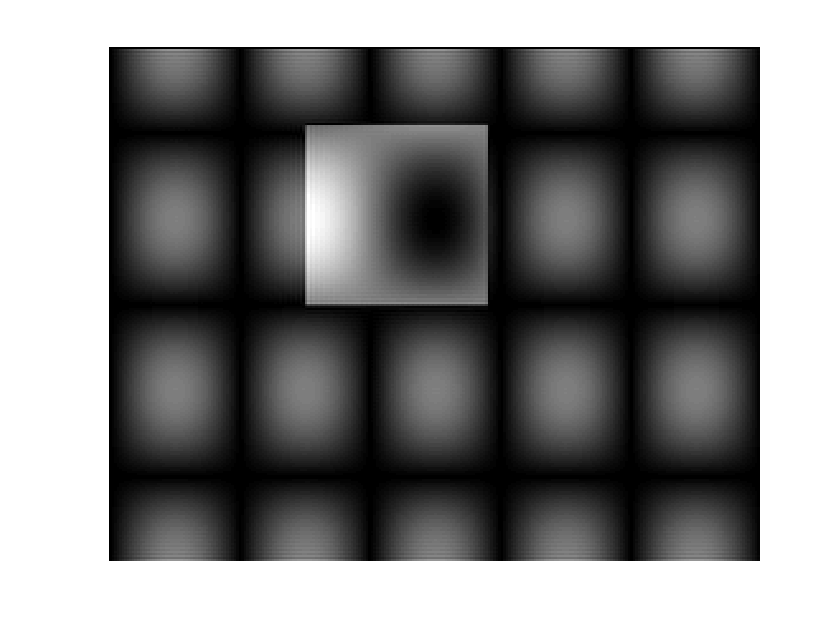}\label{fig:TF2D256}} \quad 
	\subfloat[PBDW-method from $256^2$ samples]{\includegraphics[width=0.45\textwidth]{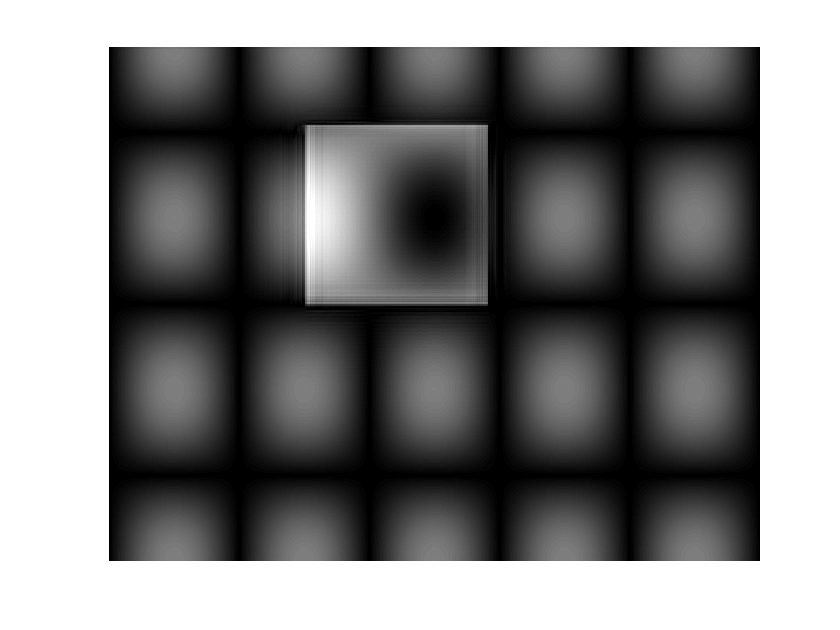}\label{fig:FourierDA2D64256}}
	\caption{Reconstruction from Fourier measurements with Daubechies wavelets $4$ and $\operatorname{dim} \R = 64^2$}
	\label{fig:Fourier2D}
\end{figure}

In this section we want to underline the findings from \S \ref{Ch:Comparison}, especially, the differences between generalized sampling and the PBDW technique with examples using both Fourier and Walsh samples. We emphasise that some of the examples are chosen particularly to highlight the differences between the two methods. Hence, the examples may not reflect typical practical scenarios. In Figure \ref{fig:HaarFour} we look at an extreme case and consider the task of recovering the Haar wavelet from Fourier coefficients. Generalized sampling will obviously recover the function perfectly, given that we choose the Haar basis for the reconstruction space. In this very special case, the recovered solution is consistent with the samples. Nevertheless, this is not true in general and only possible because one gets perfect recovery. In particular, generalized sampling will usually provide non-consistent solutions. The PBDW-method on the other hand is always consistent. The effect of this is that, even when the reconstruction space is fixed, the solution changes with the number of samples. Moreover, in this example it gets closer, as the number of samples increase, to the solution provided by simply truncating the Fourier series.

The next example in Figure \ref{fig:WalHelp} considers Walsh samples. The original function displayed in Figure \ref{fig:WalHelpOrig} is continuous with two jump discontinuities. Moreover, the continuous part is very well represented with Daubechies $8$ wavelets, as demonstrated by the generalized sampling reconstruction in Figure \ref{fig:WalHelpGS}. Nevertheless, there are some artefacts at the jumps. Walsh functions represent the jumps better, but lead to heavy artefacts along the continuous part of the function \ref{fig:HaarFourTF64}. Hence, both spaces have advantages and disadvantages and can represent the signal well in different areas. The PBDW-method allows us to take advantages from both spaces, as it does not force the solution to stay in the reconstruction space. This leads to a different reconstruction quality as in \ref{fig:WalHelpDA128} and \ref{fig:WalHelpDA256}. In this case we also get better results with more samples even though we reconstruct the same number of coefficients.

In the last example we used the code from \cite{MilanaClarice} and consider reconstruction of images from Fourier samples. In Figure \ref{fig:Fourier2D} we see that the PBDW-method provides good results in the $2$D setting and with Fourier measurements. The function is very smooth outside the discontinuous part. Therefore, it can be effectively represented with wavelets in this area. Nevertheless, there are some obvious artefacts around the discontinuity in Figure \ref{fig:FourierGS2D64128}. The artefacts that arise from the truncated Fourier transform are also easy to spot in Figure \ref{fig:TF2D128} and less clear in Figure \ref{fig:TF2D256} due to the increased number of samples. With the PBDW-method it is possible to merge the advantages of both systems and decrease the error, as seen in Figures \ref{fig:FourierDA2D64128} and \ref{fig:FourierDA2D64256}. We can also see that an increased number of samples leads to better performance of the PBDW-method even if the number of reconstructed coefficients stays the same.

\vspace{13pt}
\centerline{ACKNOWLEDGEMENT}
\vspace{13pt}

\noindent LT acknowledges support by the UK Engineering and Physical Sciences Research Council (EPSRC) grant EP/L016516/1 for the University of Cambridge Centre for Doctoral Training, the Cambridge Centre for Analysis. AH acknowledges support from a Royal Society University Research Fellowship.

\renewcommand \thesection{Appendix \Alph{section}:}
\renewcommand \thesubsection{\Alph{section}.\arabic{subsection}}

\end{document}